\crefname{assumption}{assumption}{assumptions}
\Crefname{assumption}{Assumption}{Assumptions}
\def\argmax{\mathop{\rm arg\,max}}%
\pgfplotsset{compat=1.13}
\newcolumntype{R}{>{\columncolor{red!20}}c}
\newcolumntype{G}{>{\columncolor{green!20}}c}
\newcolumntype{B}{>{\columncolor{blue!20}}c}
\newcolumntype{Y}{>{\columncolor{yellow!20}}c}
\newcolumntype{K}{>{\columncolor{black!20}}c}
\lstdefinestyle{C++}{
	language=C++,
	keywordstyle=\bfseries\color{purple},
	stringstyle=\color{blue},
	commentstyle=\color{gray},
	comment=[l]{/*}
}
\lstdefinestyle{AMPL}{
	language=AMPL,
	aboveskip=3mm,
	belowskip=3mm,
	showstringspaces=false,
	columns=flexible,
	keywordstyle=\bfseries,
	breaklines=true,
	breakatwhitespace=true,
	tabsize=3,
}
\lstdefinelanguage{AMPL}{keywords={let,set,param,var,arc,integer,minimize,maximize,subject,to,node,
sum,in,Current,complements,integer,solve_result_num,IN,contains,less,suffix,INOUT,default,logical,
Infinity,dimen,max,symbolic,Initial,div,min,table,LOCAL,else,option,then,OUT,environ,setof,union,
all,exists,shell_exitcodeuntil,binary,forall,solve_exitcodewhile,by,if,solve_messagewithin,check,
solve_result},sensitive=true,comment=[l]{\#}}
\newcommand{\bvec}{\left(\begin{array}{c}}
\newcommand{\evec}{\end{array}\right)}
\newcommand{\bsub}{\begin{subequations}}
\newcommand{\esub}{\end{subequations}}
\newcommand{\FUB}{F_{\mathrm{UB}}}
\definecolor{peach}{HTML}{FFCCAC}
\definecolor{butter}{HTML}{FFEB94}
\definecolor{babyblue}{HTML}{C1E1DC}
\newtheorem{theorem}{Theorem}
\newtheorem{proposition}{Proposition}
\newtheorem{remark}{Remark}
\newtheorem{assumption}{Assumption}
\begin{document}

\title{Binary Quantum Control Optimization with Uncertain Hamiltonians}
\author[a]{Xinyu Fei}
\author[b]{Lucas T. Brady}
\author[c]{Jeffrey Larson}
\author[c]{Sven Leyffer}
\author[a]{Siqian Shen}

\affil[a]{Industrial and Operations Engineering Department, University of Michigan. {\tt xinyuf@umich.edu, siqian@umich.edu}}
\affil[b]{KBR @ NASA Ames Quantum Artificial Intelligence Laboratory. {\tt lucas.t.brady@nasa.gov}}
\affil[c]{Mathematics and Computer Science Division, Argonne National Laboratory.  {\tt jmlarson@anl.gov, leyffer@anl.gov}}
\date{}

\maketitle

\begin{abstract}
Optimizing the controls of quantum systems plays a crucial role in advancing quantum technologies. 
The time-varying noises in quantum systems and the widespread use of inhomogeneous quantum ensembles raise the need for high-quality quantum controls under uncertainties. 
In this paper, we consider a stochastic discrete optimization formulation of a binary optimal quantum control problem involving Hamiltonians with predictable uncertainties.
We propose a sample-based reformulation that optimizes both risk-neutral and
risk-averse measurements of control policies, and solve these with two
gradient-based algorithms using sum-up-rounding approaches. 
Furthermore, we discuss the differentiability of the objective function and
prove upper bounds of the gaps between the optimal solutions to binary control
problems and their continuous relaxations.
We conduct numerical studies on various sized problem instances based of two
applications of quantum pulse optimization; we evaluate different strategies to
mitigate the impact of uncertainties in quantum systems. 
We demonstrate that the controls of our stochastic optimization model achieve
significantly higher quality and robustness compared to the controls of a
deterministic model.
\end{abstract}

\section{Introduction}
Quantum control~\citep{Glaser_2015,Werschnik_2007,d2021introduction} focuses on designing efficient and accurate controls that manipulate quantum systems toward desired quantum states and operations.
Applications of quantum control include nuclear magnetic resonance experiments~\citep{kehlet2004improving,khaneja2005optimal,maximov2008optimal,skinner2003application} and quantum chemistry~\citep{Kosloff_1989,Peirce_1988}.
With the development of quantum technologies, quantum control plays an important role in quantum information~\citep{pawela2016various,Motzoi2009,gokhale2019partial,leung2017speedup,Palao2002,Palao2003,rebentrost2009optimal} and the high-level design of quantum algorithms~\citep{Brady_2021,brady2021annealing,Fei2023binarycontrolpulse,fei2023switching,Bapat_2018,Mbeng_2019,Venuti_2021}.

Various methods have been developed to solve quantum optimal control problems.
\citet{khaneja2005optimal} develop the gradient ascent pulse engineering (GRAPE) algorithm, which estimates controls by piecewise constant functions.
\citet{Larocca_2021} improve the computational efficiency of the GRAPE algorithm using a Krylov subspace approach.
Another popular method is the chopped random basis algorithm, which describes the control space using basis functions~\citep{doria2011optimal,caneva2011chopped,sorensen2018quantum}.
Other studies solve these optimal control problems using gradient-free methods, including evolution algorithms \citep{zahedinejad2014evolutionary} and reinforcement learning approaches \citep{bukov2018reinforcement,niu2019universal,sivak2022model,chen2013fidelity,zhang2019does}.
For binary control problems, \citet{vogt2021binary} propose a trust-region method for the optimal control of a single-flux quantum system.
\citet{Fei2023binarycontrolpulse} develop a solution framework for general quantum systems and further improve it using a switching time approach \citep{fei2023switching}.
 
The aforementioned papers study only deterministic quantum optimal control problems, which have fixed Hamiltonians, Hermitian operators that generate the time evolution of the system and whose eigenvalues correspond to energy levels of the system.  These Hamiltonians are specified by the experimental setup with the control coming from how long we use each Hamiltonian and in what order for time evolving the state of the system.
However, the imprecise estimation of the Hamiltonian controllers and time-varying noises in quantum systems has recently raised the need for robust quantum control \citep{dahleh1990optimal,grace2012optimized,propson2022robust,green2013arbitrary,kosut2022robust}.
Moreover, designing a robust uniform control is an important topic in inhomogeneous quantum ensembles, involving a large number of quantum systems with variations in system parameters \citep{chen2014sampling,li2006control,pryor2006fourier,owrutsky2012control,mischuck2012control}.
Fourier decomposition methods can be applied to design a uniform control for inhomogeneous quantum fields \citep{li2006control,pryor2006fourier}.
\citet{barr2022qubit} extend quantum noise spectroscopy to design optimized amplitude control waveforms that suppress low-frequency dephasing noise and detuning errors.
\citet{ruths2011multidimensional} propose a multidimensional pseudospectral method with uncertainty sampling for optimal control of quantum ensembles.
\citet{chen2014sampling} apply a sample approximation algorithm, and \citet{wu2019learning} extend it to a batch-based sampling algorithm that minimizes the expected error between final and target operations.
To hedge against risk, other studies focus on optimizing the worst-case performance under uncertainties.
\citet{wesenberg2004designing} solves a robust quantum optimal control problem using a general minmax algorithm based on a series of constrained quasi-Newton sequential quadratic programs.
\citet{kosut2013robust} develop a sample-based sequential convex programming scheme to obtain an optimal control for the worst-case robust optimization problem.

Here we seek to make three major contributions to the literature. First, we develop a stochastic optimization model and a sample-based reformulation for the general quantum optimal control problem under uncertain Hamiltonians; this model balances risk-neutral and risk-averse objectives.
Second, we apply multiple gradient-based methods and rounding techniques to solve the reformulated mixed-integer stochastic programming model.
We provide the derivative of the objective function as well as derive bounds for the gap between solutions before and after rounding.
Third, we analyze the performance of our approaches under various variance settings and demonstrate the benefits of considering uncertainties when conducting binary controls of various quantum systems.

The manuscript is organized as follows. \Cref{sec:model} presents a general mixed-integer stochastic optimization model and its reformulation based on finite samples of the uncertain Hamiltonians.
In \Cref{sec:algorithm} we derive the reformulation of the original stochastic optimization model and propose our gradient-based algorithm to solve the continuous relaxation. We apply rounding techniques to obtain binary controls and analyze the gap between binary and continuous control solutions.
\Cref{sec:numeric} introduces two specific quantum control instances and discusses the results of our numerical tests and simulation.
\Cref{sec:conclusion} summarizes our study and states future research directions.

\section{Modeling Uncertain Hamiltonians}
\label{sec:model}
For completeness, \Cref{sec:model-deter} first introduces the model for
deterministic binary quantum optimal control presented in \citet{fei2023switching}.
In \Cref{sec:model-sp}, we extend this model to a general stochastic optimization model and we propose a sample-based reformulation.
\Cref{sec:risk-measure} describes how the objective function can be modified to account for various risk measures.

\subsection{Deterministic Optimal Control Model}
\label{sec:model-deter}
Consider a quantum system with $q$ qubits, and let the control process be conducted on the time interval $[0,t_f]$, where $t_f$ is defined as the evolution time.
Let $H^{(0)}\in \mathbb{C}^{2^q\times 2^q}$ be the intrinsic Hamiltonian, a Hamiltonian over which we do not have control and which is always applied in the time evolution of the system.
Let $N$ be the number of control Hamiltonians
given by $H^{(j)}\in \mathbb{C}^{2^q\times 2^q},\ j=1,\cdots,N$.
Let $X_{\mathrm{init}},\ X_{\mathrm{targ}}\in \mathbb{C}^{2^q\times 2^q}$ be the initial and target unitary operators of the quantum system, respectively.
We define $T$ as the number of time steps and divide the time horizon $[0,t_f]$ into $T$ time intervals $(t_{k-1}, t_{k}],\ k=1,\cdots,T$.
(In this work we use a uniform time discretization where each time interval has an equal length $\Delta t=t_f/T$, but our work can be extended to discretization with nonuniform time interval length.)
For each controller $j=1,\cdots,N$ and each time step $k=1,\cdots,T$, we define discretized control variables as $u_{jk}$.
For each time step $k=1,\cdots,T$, we define the discretized time-dependent Hamiltonians as $H_{k}\in \mathbb{C}^{2^q\times 2^q}$ and unitary operators as $X_k\in \mathbb{C}^{2^q\times 2^q}$.

We define the general deterministic binary quantum control model as follows.
\begin{subequations}
\makeatletter
\def\@currentlabel{D}
\makeatother
\label{eq:model-d-1}
\begin{align}
    \label{eq:model-d-1-obj}
    (\mathrm{D})\quad \min_{u,X,H} \quad & F_X(X_T)\\
    \label{eq:model-d-1-cons-h}
    \mathrm{s.t.}\quad & H_k = H^{(0)} + \sum_{j=1}^N u_{jk}H^{(j)},\ k=1,\ldots,T\\
    \label{eq:model-d-1-cons-s}
    & X_{k}=e^{-i H_k \Delta t}X_{k-1},\ k=1,\ldots,T \\
    \label{eq:model-d-1-cons-i}
    & X_0 = X_\mathrm{init}\\
    \label{eq:model-d-1-cons-u-sum}
    & \sum_{j=1}^N u_{jk} = 1,\ k=1,\ldots,T\\
    \label{eq:model-d-1-cons-u}
    & u_{jk}\in \left\{0,1\right\},\ j=1,\ldots,N,\ k=1,\ldots,T.
\end{align}
\end{subequations}
The objective function $F_X$ is a general function to measure the difference between the final and desired quantum systems and can take different forms depending on specific problem instances.
(In the numerical simulations of this manuscript, we employ two widely used objective functions in the quantum control field, but our methods described later are general and can be used for general differentiable functions $F_X$.)

In quantum mechanics, the quantum state is defined as the description of the physical properties of a quantum system.
We use $|\cdot\rangle$ and $\langle\cdot|$ to represent a quantum state
vector and its conjugate transpose, respectively.  This notation is standard Dirac notation in quantum mechanics; and the state of a quantum system, $|\cdot\rangle$, can be represented numerically as a $2^q$ dimensional, complex vector, normalized to one.  We also use $\cdot^\dagger$ as a modifier to denote the conjugate transpose of a complex matrix.
We define a parameter $\tilde{H}$ as the constant Hamiltonian that determines the energy structure of the quantum system.

One function we will use later evaluates the difference between the energy of the quantum system with final operator $X_T$ and the minimum energy corresponding to $\tilde{H}$, given by
\begin{align}
    \label{eq:obj-energy}
    F_X(X_T) = 1 - \langle \psi_0 | X^\dagger_T \tilde{H} X_T |\psi_0\rangle / E_{\mathrm{min}},
\end{align}
where $|\psi_0\rangle$ represents the initial state of the quantum system and the constant minimum energy $E_\mathrm{min}$ represents the minimum eigenvalue of $\tilde{H}$, with an assumption that $E_\mathrm{min}\neq0$.

Another function we consider is the infidelity function
\begin{align}
\label{eq:obj-fid}
    F_X(X_T) = 1 - \frac{1}{2^q} \left|\textbf{tr} \left\{ X^\dagger_\mathrm{targ} X_T\right\}\right|,
\end{align}
which measures the difference between $X_T$ and the target operator $X_\mathrm{targ}$. Both objective functions \eqref{eq:obj-energy} and \eqref{eq:obj-fid} are bounded between $[0,1]$.

Constraints \eqref{eq:model-d-1-cons-h} compute time-dependent Hamiltonians $H_k$ for $k=1,\ldots,T$ as linear combinations of the intrinsic Hamiltonian and the control Hamiltonians weighted by control variables $u$.
The constraints, \eqref{eq:model-d-1-cons-s}, describe the time evolution process for computing unitary operators, $X$, by solving the Schr\"odinger equation.
Constraint \eqref{eq:model-d-1-cons-i} is the initial condition of the unitary operators.
Constraint \eqref{eq:model-d-1-cons-u-sum} ensures that at each time step, the summation of all the control values should be one, which is Special Ordered Set of Type 1 (SOS1) property in optimal control theory \citep{sager2012integer}.
Combining the SOS1 property with binary constraints for control variables \eqref{eq:model-d-1-cons-u}, we ensure that only one controller is active at any time.

\subsection{Stochastic Optimal Control Model}
\label{sec:model-sp}
In practice, the intrinsic and control Hamiltonians are affected by time-dependent noise due to various reasons such as decoherence, hardware limitations, and environmental noise~\citep{lidar2013quantum,breuer2007theory,Bruzewicz_2019}.
On the other hand, multiple applications, such as inhomogeneous quantum ensembles, require applying a uniform control to manipulate quantum systems with different Hamiltonian values.
These properties and applications lead to quantum control studies that take the uncertainty of Hamiltonians into consideration.
In this manuscript we assume that the uncertainty parameters have a known distribution, and we define a measure space $(\Xi, 2^\Xi, \mathcal{P})$, where $\Xi\subseteq \mathcal{R}^{(N+1)\cdot T}$ is the sample space and $\mathcal{P}$ represents the probability distribution function.
We denote the uncertainty parameters as $\xi=\left[\xi_{jk}\right]\in \Xi$, where $\xi_{jk},\ j=0,\ldots,N,\ k=1,\ldots,T$ represents the uncertainty of the intrinsic Hamiltonian ($j=0$) and $j$th control Hamiltonian ($j\geq 1$) in the time interval $k$.
We use finite samples to approximate the distribution, and we define the set $\mathcal{S}$ as a finite set of uncertainty realizations, $\mathcal{S}=\{\xi^1,\ldots, \xi^S\}$, according to the distribution $\mathcal{P}$ such that $\xi^s$, for each $s=1,\ldots,S$, is associated with probability $p_s$ with $\sum_{s=1}^S p_s=1$.
The time-dependent Hamiltonians and unitary operators are the functions of uncertain parameters $\xi$.
For each sample $\xi^s,\ s=1,\ldots,S$, at each time step $k=1,\ldots,T$, we denote the corresponding time-dependent Hamiltonian $H_k$ and unitary operator $X_k$ as $H_k^s$ and $X_k^s$, respectively.
We define $\rho\left[F_X(X_T^1),\ldots, F_X(X_T^S)\right]$ as a risk measure function based on the uncertainty sample set $\mathcal{S}=\{1,\ldots,S\}$. The generic stochastic optimization model variant of Model \eqref{eq:model-d-1} is given by
\begin{subequations}
\makeatletter
\def\@currentlabel{SP($\mathcal{S}$)}
\makeatother
\label{eq:model-sp-sample}
\begin{align}
    \label{eq:model-sp-sample-obj}
    (\textrm{SP($\mathcal{S}$)}) \quad
    \min_{u,X,H} \quad & \rho\left[F_X(X_T^1),\ldots,F_X(X_T^S)\right]\\
    \label{eq:model-sp-sample-cons-h}
    \textrm{s.t.}\quad  & H_k^s = (1+\xi^s_{0k})H^{(0)} + \sum_{j=1}^N (1+\xi^s_{jk})u_{jk}H^{(j)},\ k=1,\ldots,T,\ s=1,\ldots,S\\
    \label{eq:model-sp-sample-cons-s}
    & X_{k}^s=e^{-i H_k^s \Delta t}X_{k-1}^s,\ k=1,\ldots,T,\ s=1,\ldots,S \\
    \label{eq:model-sp-sample-cons-i}
    & X_0^s = X_\textrm{init},\ s=1,\ldots,S\\
    \label{eq:model-sp-samplecons-u}
    & \sum_{j=1}^N u_{jk} = 1,\ u_{jk}\in \left\{0,1\right\},\ j=1,\ldots,N,\ k=1,\ldots,T.
\end{align}
\end{subequations}
The objective \eqref{eq:model-sp-sample-obj} uses the risk measure $\rho$ to evaluate the risk of having deviations from the desired cost of controlling quantum operators under uncertainty, which we detail in \Cref{sec:risk-measure}.
(In \Cref{sec:numeric} for the numerical results, we introduce specific quantum control objective functions for different examples.)
Constraints \eqref{eq:model-sp-sample-cons-h} compute the time-dependent Hamiltonians given samples $\xi^s,\ s=1,\ldots,S$.
Constraints \eqref{eq:model-sp-sample-cons-s}--\eqref{eq:model-sp-sample-cons-i} are the copies of constraints \eqref{eq:model-d-1-cons-s}--\eqref{eq:model-d-1-cons-i} for scenarios $s=1,\ldots,S$.

\subsection{Risk Measures and Objective Functions}
\label{sec:risk-measure}
The risk measure $\rho$ in \eqref{eq:model-sp-sample-obj} in the stochastic optimization model can take different forms depending on the decision-maker's risk attitudes and uncertainty levels.
One of the most widely used measures is the expectation of a random variable, which measures the average performance, also known as the risk-neutral measure \citep{shapiro2014lectures}.
For any stochastic function $f(\xi)$ on the measure space $(\Xi, 2^\Xi, \mathcal{P})$, the expectation is defined as
    $\mathbb{E}_\xi[f(\xi)] = \int_{\Xi} f(\xi) d\mathcal{P}$.
With a sample set $\mathcal{S}$, the approximation expectation formulation for the stochastic function $f(\xi)$ is $\displaystyle \sum_{s=1}^S p_sf(\xi^s)$.

However, a particular realization of $\xi$ can be significantly different from its expectation.
In some quantum control applications, avoiding extremely poor performance of the given systems is important and necessary.
Here we consider a risk-averse measure, to control the risk in the solutions given by the stochastic optimization model \ref{eq:model-sp-sample}
\citep{sarykalin2008value,rockafellar2000optimization,rockafellar2002conditional,pflug2000some}.
We consider the Conditional Value-at-Risk (CVaR) \citep{rockafellar2000optimization} because it is a coherent risk measure with nice properties such as convexity.
For any stochastic function $f(\xi)$, the CVaR with risk level $\eta$ is defined as the expected value of $f(\xi)$ subject to the constraint that the value of $f(\xi)$ is no less than the lower $1-\eta$ percentile \citep{sarykalin2008value}.
\Cref{fig:cvar-illustration} illustrates CVaR for $\eta=0.05$, where the blue line at 2 represents the 95th percentile of $f(\xi)$ and the red dashed line represents the CVaR value as the average of all values of $f(\xi)$ that are larger than 2.

\begin{figure}[t]
    \centering
    \includegraphics[scale=0.55]{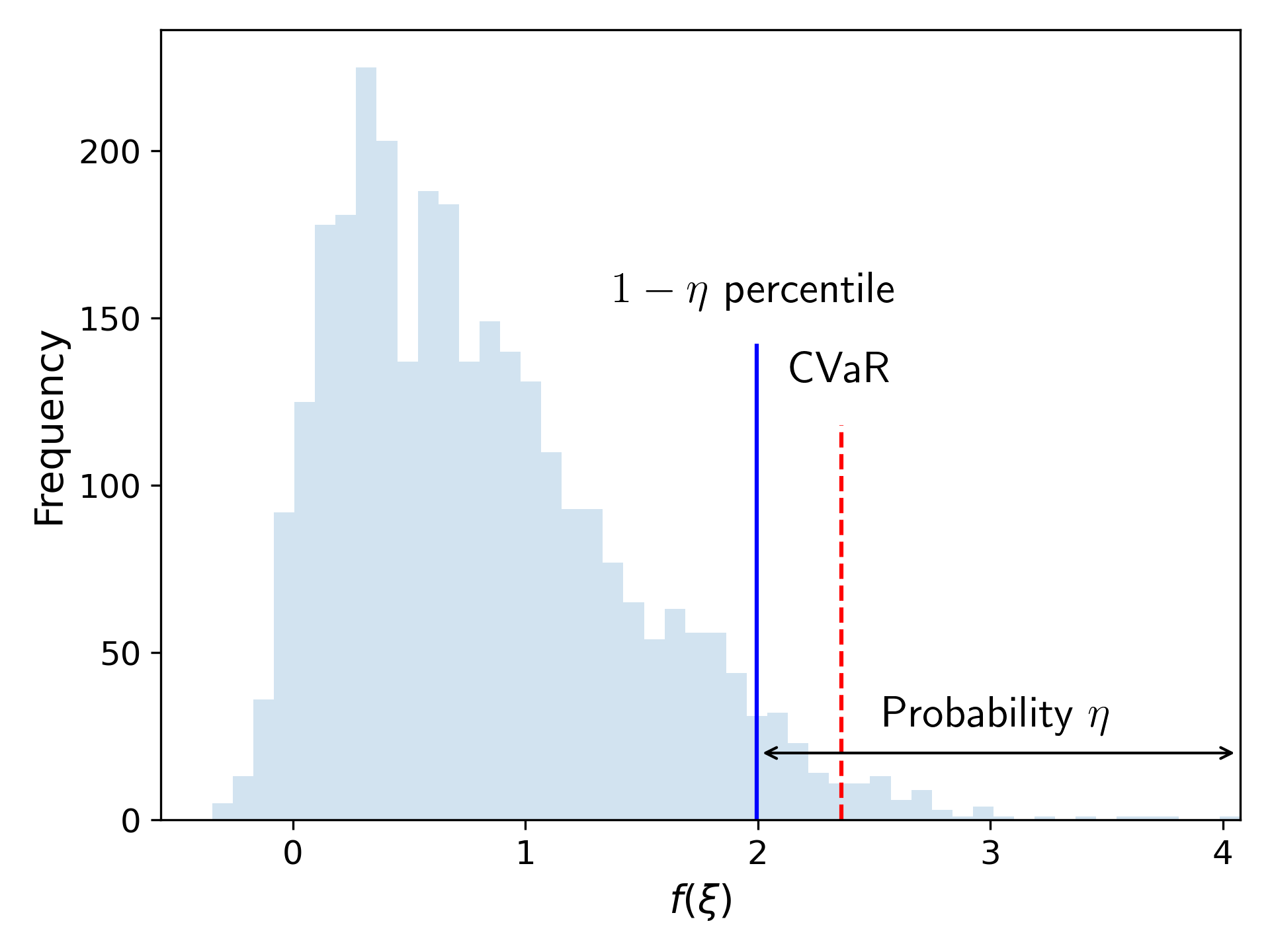}
    \caption{Illustration for CVaR function with risk level $\eta=0.05$. The histograms are the distribution of $f(\xi)$. The blue line represents the 95 percentile of $f(\xi)$. The red dashed line represents the CVaR value. }
    \label{fig:cvar-illustration}
\end{figure}

We also introduce the equivalent formulation for the CVaR function \citep{rockafellar2000optimization}:
\begin{align}
\label{eq:cvar-int}
   \mathrm{CVaR}_\eta \left(f(\xi)\right) = \inf\left\{\zeta + \frac{1}{\eta}\int_{\Xi} \max\{0, f(\xi) - \zeta \} d\mathcal{P} : \zeta\in \mathcal{R} \right\}.
\end{align}
The CVaR of $f(\xi)$ has the following sample-based approximation form:
\begin{align}
    \label{eq:cvar-sample}
    \min_{\zeta\in \mathcal{R}}\left( \zeta + \frac{1}{\eta} \sum_{s=1}^S p_s \max\{0, f(\xi^s)-\zeta\}\right).
\end{align}

One can consider a linear combination of expectation and CVaR function as the specific objective function in \ref{eq:model-sp-sample} to balance between risk-neutral and risk-averse attitudes. With a sample set $\mathcal{S}$, the risk measure~\eqref{eq:model-sp-sample-obj} can be formulated:
\begin{align*}
    \rho\left[F_X(X_T^1),\ldots,F_X(X_T^S)\right] = \alpha \sum_{s=1}^S p_s F_X(X_T^s) + (1-\alpha)\min_{\zeta\in \mathcal{R}}\left(\zeta + \frac{1}{\eta} \sum_{s=1}^S p_s \max\{0, F_X(X_T^s)-\zeta\}\right),
\end{align*}
where $\alpha\in [0,1]$ is a weight parameter and $\eta$ is a risk level parameter. When $\alpha=0$, the problem is equivalent to minimizing the CVaR function to obtain a risk-averse control. When $\alpha=1$, the goal is to optimize the expected performance of the control.

\section{Gradient-Based Algorithm}
\label{sec:algorithm}
We now present our algorithm for solving the stochastic optimization model, which consists of two parts, first solving a continuous relaxation and then rounding the continuous solution.
In \Cref{sec:continuous} we first convert the stochastic optimization model with sample approximation \ref{eq:model-sp-sample} to an unconstrained optimization model.
We then discuss the derivative for the objective function and introduce two gradient-based algorithms to solve the continuous relaxation of the model.
In \Cref{sec:rounding} we apply a sum-up rounding algorithm to obtain binary solutions with an optimality guarantee.

\subsection{Solution Methods for Continuous Relaxation}
\label{sec:continuous}
To start, we follow constraints~\eqref{eq:model-sp-sample-cons-h}--\eqref{eq:model-sp-sample-cons-i} and convert the final operator $X_T^s$ into an implicit function of control variables $u$:
\begin{align}
\label{eq:final-operator}
    X_T^s(u) = \prod_{k=1}^T \exp\left\{-i\left((1+\xi^s_{0k})H^{(0)} + \sum_{j=1}^N (1+\xi^s_{jk}) u_{jk} H^{(j)} \right)\Delta t\right\}X_\mathrm{init},\ s=1,\ldots,S.
\end{align}
We use $F^s(u)$ to denote the objective function of $u$ given uncertainty realization $\xi^s$ by substituting $X_T^s(u)$ into the objective function~\eqref{eq:model-d-1-obj}; that is, $F^s(u)=F_X(X_T^s(u))$.
We penalize the SOS1 property~\eqref{eq:model-d-1-cons-u-sum} by a squared $L_2$ penalty function in the form of
\begin{align}
  \label{eq:F_L_def}
    F_L(u)=\sum_{k=1}^T \left(\sum_{j=1}^N u_{jk} - 1\right)^2.
\end{align}
By relaxing the binary constraints~\eqref{eq:model-d-1-cons-u}, the stochastic optimization model \ref{eq:model-sp-sample} is converted to an unconstrained optimization problem over a bounded feasible region as
\begin{align}
\label{eq:uncons}
     \min_{u\in [0,1]^{N\times T}} \  \alpha \sum_{s=1}^S p_s F^s(u) + (1-\alpha)\min_{\zeta}\left(\zeta + \frac{1}{\eta} \sum_{s=1}^S p_s \max\{0, F^s(u)-\zeta\}\right) + \theta F_L(u),
\end{align}
where $\theta$ is the penalty weight parameter for the SOS1 property.
The $\max$ function in the second term leads to the objective possibly being
nondifferentiable with respect to the variables $u$ and $\zeta$; we therefore
discuss the closed-form expression and then the derivative of the objective function in the following theorems.
For simplicity, we denote the second term without weight $(1-\alpha)$ by $F_\mathrm{CVaR}(u,\zeta)$: 
\begin{align}
    F_\mathrm{CVaR}(u,\zeta) = \zeta + \frac{1}{\eta} \sum_{s=1}^S p_s \max\{0, F^s(u)-\zeta\}.
\end{align}
We derive a closed-form expression of $\min_\zeta F_\mathrm{CVaR}(u,\zeta)$ as follows.
\begin{theorem}
\label{thm:form-cvar}
For a given control variable $u$, define $s^*(u)$ as the scenario number with the largest original objective value $F^{s^*(u)}(u)$ such that
\begin{align}
\label{eq:def-s-star}
    \sum_{s=1}^S p_s \mathbb{1}_{\{F^s(u) > F^{s^*(u)}(u)\}} \geq \eta.
\end{align}
Then the closed-form expression of $\min_\zeta F_\mathrm{CVaR}(u,\zeta)$ at point $u$ is given by
    \begin{align}
\label{eq:form-great}
    F_{C} (u) & =
    F^{s^*(u)}(u) + \frac{1}{\eta} \sum_{s:F^s(u)> F^{s^*(u)}(u)} p_s (F^s(u) - F^{s^*(u)}(u)).
\end{align}
\end{theorem}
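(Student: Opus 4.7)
The plan is to fix $u$ and study $F_\mathrm{CVaR}(u,\cdot)$ as a univariate function of the scalar $\zeta$. First I would observe that it is a convex, piecewise-linear function of $\zeta$: it is the sum of the affine term $\zeta$ and a non-negative combination of the convex piecewise-linear functions $\max\{0,F^s(u)-\zeta\}$, whose only breakpoints sit at $\zeta=F^s(u)$ for $s=1,\ldots,S$. Consequently $\min_\zeta F_\mathrm{CVaR}(u,\zeta)$ is finite and is attained at one of these breakpoints, so it suffices to identify the correct breakpoint and evaluate.

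Next I would compute the one-sided derivatives. Using that $\tfrac{d}{d\zeta}\max\{0,F^s(u)-\zeta\}$ equals $-1$ when $F^s(u)>\zeta$, equals $0$ when $F^s(u)<\zeta$, and has subdifferential $[-1,0]$ at $F^s(u)=\zeta$, I would derive
\begin{align*}
D^-(\zeta) \;=\; 1-\frac{1}{\eta}\sum_{s:\,F^s(u)\geq\zeta}p_s, \qquad D^+(\zeta) \;=\; 1-\frac{1}{\eta}\sum_{s:\,F^s(u)>\zeta}p_s.
\end{align*}
Convex optimality then reduces to finding $\zeta^*$ with $D^-(\zeta^*)\le 0\le D^+(\zeta^*)$. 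I would plug $\zeta^*=F^{s^*(u)}(u)$ into these inequalities and check them against the definition \eqref{eq:def-s-star}: the defining inequality $\sum_{s:F^s(u)>F^{s^*(u)}(u)}p_s\ge\eta$ controls one side, while the maximality clause (no larger $F^{s'}(u)$ satisfies the defining inequality) controls the other, using that $\zeta\mapsto\sum_{s:F^s(u)>\zeta}p_s$ is a right-continuous step function whose jumps occur exactly at the values $F^s(u)$.

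Once $\zeta^*=F^{s^*(u)}(u)$ is established as a minimizer, the formula \eqref{eq:form-great} follows by direct substitution: scenarios with $F^s(u)\leq F^{s^*(u)}(u)$ contribute $0$ to the inner $\max$, scenarios with $F^s(u)>F^{s^*(u)}(u)$ contribute $F^s(u)-F^{s^*(u)}(u)$, and the outside $\zeta^*$ contributes $F^{s^*(u)}(u)$. The main obstacle is the bookkeeping around ties and the discrete character of the step function above: one has to align the convex optimality condition with the ``largest'' selection rule in the definition of $s^*(u)$, handle repeated values of $F^s(u)$, and verify that the chosen kink lies on the flat (or sign-changing) segment of the piecewise-linear objective so that $F^{s^*(u)}(u)$ genuinely attains the minimum rather than merely a subgradient boundary.
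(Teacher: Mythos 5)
Your route---treating $F_\mathrm{CVaR}(u,\cdot)$ as a convex piecewise-linear function of $\zeta$ and locating the minimizer via the one-sided derivatives $D^\pm$---is genuinely different from the paper's proof, which instead rewrites $F_\mathrm{CVaR}(u,\zeta)$ in terms of $F_\mathrm{CVaR}(u,\zeta^*(u))$ and argues the two cases $\zeta<\zeta^*(u)$ and $\zeta>\zeta^*(u)$ by direct algebraic comparison. Your derivative formulas are correct, and in principle the subdifferential argument is the cleaner one. But the step you defer as ``bookkeeping around ties'' is exactly where the argument does not close, and it is worth being precise about why.

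Optimality of $\zeta^*=F^{s^*(u)}(u)$ requires both $D^-(\zeta^*)\le 0$ and $D^+(\zeta^*)\ge 0$, i.e.,
\begin{align*}
\sum_{s:F^s(u)\ge\zeta^*}p_s\;\ge\;\eta\;\ge\;\sum_{s:F^s(u)>\zeta^*}p_s.
\end{align*}
The defining inequality \eqref{eq:def-s-star} gives $\sum_{s:F^s(u)>\zeta^*}p_s\ge\eta$, which yields the left inequality (hence $D^-(\zeta^*)\le 0$) but points the \emph{wrong way} for the right one: it forces $D^+(\zeta^*)\le 0$, not $\ge 0$. The maximality clause does not rescue this: it only says that for the next attained value $v'>\zeta^*$ one has $\sum_{s:F^s(u)>v'}p_s<\eta$, and this sum differs from $\sum_{s:F^s(u)>\zeta^*}p_s$ by the probability mass sitting at $v'$, so it does not imply $\sum_{s:F^s(u)>\zeta^*}p_s\le\eta$. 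Concretely, with $S=10$ equiprobable distinct values and $\eta=0.05$, the rule \eqref{eq:def-s-star} selects the second-largest value of $F^s(u)$ while the true minimizer is the largest, and \eqref{eq:form-great} then strictly exceeds the minimum. Your verification succeeds only when \eqref{eq:def-s-star} holds with equality, or when the indicator in \eqref{eq:def-s-star} is read with $\ge$ in place of $>$ (the reading consistent with the paper's remark that $\zeta^*$ is the $\lceil\eta S\rceil$-th largest value); as written, the plan cannot be completed. The paper's own $\zeta>\zeta^*$ case quietly requires the same reverse inequality, so your method at least has the virtue of making the missing ingredient explicit---but you must either adjust the definition of $s^*(u)$ or add the equality hypothesis before the subdifferential check goes through.
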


\begin{proof}
To prove the closed-form expression, it is equivalent to prove that given any feasible control variable $u$, $\zeta^*(u) = F^{s^*(u)}(u)$ is an optimal solution for the minimization problem $ \min_{\zeta} F_\mathrm{CVaR} (u,\zeta)$.
When $\zeta < \zeta^*(u) = F^{s^*(u)}(u)$, we have
\begin{align}
    F_\mathrm{CVaR}(u,\zeta) =& \zeta + \frac{1}{\eta} \sum_{s=1}^S p_s \max\{0, F^s(u)-\zeta\} = \zeta + \frac{1}{\eta}\sum_{s:\zeta< F^s(u)} p_s (F^s(u)-\zeta) \nonumber\\
    =&F_\mathrm{CVaR}(u, \zeta^*(u)) + \zeta - \zeta^*(u) + \frac{1}{\eta} \sum_{s:\zeta< F^s(u)\leq \zeta^*(u)}
    p_s(F^s(u)-\zeta) \nonumber\\
    &+ (\zeta^*(u)-\zeta) \frac{1}{\eta}\sum_{s:F^s(u) > \zeta^*(u)} p_s \nonumber\\
    \geq& F_\mathrm{CVaR}\left(u,\zeta^*(u)\right) + \frac{1}{\eta} \sum_{s:\zeta< F^s(u)\leq \zeta^*(u)}
    p_s(F^s(u)-\zeta) \geq F_\mathrm{CVaR}(u,\zeta^*(u)).
\end{align}
The equalities directly follow from the definition of $F_\mathrm{CVaR}(u,\zeta)$.
The first inequality holds because of the definition of $s^*(u)$ such that $\sum_{s:F^s(u)>\zeta^*(u)} p_s \geq \eta$.
The last inequality holds because all the terms in the summation have $F^s(u)>\zeta$. Similarly, we can show that when $\zeta > \zeta^*(u) = F^{s^*(u)} (u)$, we have
\begin{align}
    F_\mathrm{CVaR}(u,\zeta) = &\zeta + \frac{1}{\eta} \sum_{s=1}^S p_s \max\{0, F^s(u)-\zeta\} = \zeta + \frac{1}{\eta}\sum_{s:\zeta< F^s(u)} p_s (F^s(u)-\zeta) \nonumber\\
    = & F_\mathrm{CVaR}(u,\zeta^*(u)) + \zeta - \zeta^*(u) + \frac{1}{\eta} \sum_{s:\zeta^*(u)< F^s(u)\leq \zeta}
    p_s(\zeta - F^s(u)) \nonumber\\
    & + (\zeta^*(u)-\zeta)\frac{1}{\eta}\sum_{s:F^s(u)> \zeta^*(u)} p_s \nonumber\\
    \geq & F_\mathrm{CVaR}(u,\zeta^*(u)) + \frac{1}{\eta} \sum_{s:\zeta^*(u)< F^s(u)\leq \zeta}
    p_s(\zeta - F^s(u)) \geq F_\mathrm{CVaR}(u,\zeta^*(u)).
\end{align}
The only difference is that for the last inequality it holds because all the terms in the summation have $F^s(u)\leq \zeta$.
\end{proof}
\begin{remark}
    The smallest optimal solution of the minimization problem $\min_\zeta F_{\mathrm{CVaR}}(u,\zeta)$ is the value-of-risk (VaR) function value with risk level $\eta$ (see \citet{pang2004global}).
    Because scenarios $s$ with $F^s(u)-F^{s^*(u)}(u)=0$ contribute nothing to the summation, the closed-form expression~\eqref{eq:form-great} is equivalent to $F^{s^*(u)}(u) + \frac{1}{\eta} \sum_{s:F^s(u)\geq F^{s^*(u)}(u)} p_s (F^s(u) - F^{s^*(u)}(u))$.
\end{remark}
\begin{remark}
For a special case where we sample the scenario with equal probability, that is, $p_s=\frac{1}{S}$, an optimal solution $\zeta^*=F^{s^*(u)}(u)$ is the $\left\lceil \eta S\right\rceil$ largest original objective function value $F^s(u)$ among all the scenarios $s=1,\ldots,S$.
\end{remark}
Using the closed-form expression in~\eqref{eq:form-great},
we convert the original problem to minimizing an unconstrained continuous relaxation with uncertainty sample set $\mathcal{S}=\{\xi^1,\ldots,\xi^S\}$.
For simplicity, in the remaining discussion we define the summation of terms coming from the original objective function as
\begin{align}
    \label{eq:model-sp-sample-obj-convert}
    \tilde{F}(u) = \alpha \sum_{s=1}^S p_s F^s(u) + (1-\alpha)F_C(u),
\end{align}
where $F_C$ is defined in \eqref{eq:form-great}.

Taking the $L_2$ penalty function into consideration, the unconstrained continuous relaxation is
\begin{equation}
\makeatletter
\def\@currentlabel{SP-R($\mathcal{S}$)}
\makeatother
\label{eq:model-sp-sample-r}
    (\mathrm{SP-R}(\mathcal{S}))\quad \min_{u\in [0,1]^{N\times T}} \  F_R(u) = \alpha \sum_{s=1}^S p_s F^s(u) + (1-\alpha)F_C(u) + \theta F_L(u),
\end{equation}
which is a relaxation of our original stochastic optimization model \ref{eq:model-sp-sample} with relaxed binary constraints and penalized SOS1 property~\eqref{eq:model-sp-samplecons-u}.
The differentiability of the term $F_C(u)$ depends on the objective values of all the scenarios. For a given control variable point $\hat{u}$, we present the following theorem about the derivative.
\begin{theorem}
\label{thm:derivative-cvar}
For any given control variable point $\hat{u}$, if $F^s(\hat{u})\neq F^{s^*(\hat{u})} (\hat{u}),\ \forall s\neq s^*(\hat{u})$, then the closed form $F_C(u)$ is differentiable at point $\hat{u}$, with the derivative formulation as
\begin{align}
    \frac{\partial F_C(\hat{u})}{\partial \hat{u}} = \left(1 - \frac{1}{\eta} \sum_{s:F^s(\hat{u})>F^{s^* (\hat{u})}(\hat{u})} p_s\right) \frac{\partial F^{s^*(\hat{u})} (\hat{u})}{\partial \hat{u}} + \frac{1}{\eta} \sum_{s:F^s(\hat{u})>F^{s^* (\hat{u})}(\hat{u})} p_s \frac{\partial F^s(\hat{u})}{\partial \hat{u}}.
\end{align}
\end{theorem}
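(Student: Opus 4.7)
The plan is to show that near $\hat{u}$ both the index $s^*(u)$ and the ``exceedance'' set $\mathcal{A}(u) = \{s : F^s(u) > F^{s^*(u)}(u)\}$ are locally constant, so the closed form \eqref{eq:form-great} reduces to a fixed affine combination of smooth scenario objectives $F^s(u)$, and then to differentiate term by term.

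First, I would observe that each $F^s(u) = F_X(X_T^s(u))$ is smooth in $u$: the unitary $X_T^s(u)$ from \eqref{eq:final-operator} is a product of matrix exponentials with linear argument in $u$, and the two concrete $F_X$'s introduced in \eqref{eq:obj-energy}--\eqref{eq:obj-fid} are smooth. In particular each $F^s$ is continuous at $\hat{u}$, and under the theorem's hypothesis the finite collection $\{F^s(\hat{u}) - F^{s^*(\hat{u})}(\hat{u})\}_{s \neq s^*(\hat{u})}$ consists of nonzero numbers. By continuity there exists an open neighborhood $\mathcal{N}$ of $\hat{u}$ on which every sign is preserved, so the partition
\[
\mathcal{A} = \{s : F^s(\hat{u}) > F^{s^*(\hat{u})}(\hat{u})\}, \qquad \mathcal{B} = \{s : F^s(\hat{u}) < F^{s^*(\hat{u})}(\hat{u})\}
\]
still satisfies $F^s(u) > F^{s^*(\hat{u})}(u)$ for $s \in \mathcal{A}$ and $F^s(u) < F^{s^*(\hat{u})}(u)$ for $s \in \mathcal{B}$, for all $u \in \mathcal{N}$.

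Second, I would verify that $s^*(u) = s^*(\hat{u})$ throughout $\mathcal{N}$. By the definition \eqref{eq:def-s-star}, $s^*(\hat{u})$ is the index with the largest $F^{s^*(\hat{u})}(\hat{u})$ satisfying the coverage inequality $\sum_s p_s \mathbb{1}_{\{F^s > F^{s^*(\hat{u})}\}} \geq \eta$. Since $\mathcal{A}$ does not change on $\mathcal{N}$, we have $\sum_{s \in \mathcal{A}} p_s \geq \eta$ uniformly, and no index from $\mathcal{B} \cup \{s^*(\hat{u})\}$ can take the role of $s^*(u)$: any such candidate with value at or above $F^{s^*(\hat{u})}(u)$ would force the scenarios strictly exceeding it to be a strict subset of $\mathcal{A}$, and the strict separation assumption at $\hat{u}$ (combined with shrinking $\mathcal{N}$ if necessary) rules out a new candidate from $\mathcal{A}$ overtaking it. Hence on $\mathcal{N}$,
\[
F_C(u) = F^{s^*(\hat{u})}(u) + \frac{1}{\eta}\sum_{s \in \mathcal{A}} p_s \bigl(F^s(u) - F^{s^*(\hat{u})}(u)\bigr),
\]
which is a fixed affine combination of smooth functions and is therefore differentiable at $\hat{u}$.

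Third, I would differentiate this expression directly with respect to $u$, group the coefficient of $\partial F^{s^*(\hat{u})}/\partial u$, and recover the claimed formula. The main obstacle is the second step: proving that $s^*$ and $\mathcal{A}$ are genuinely locally constant. Strict separation handles perturbations within $\mathcal{A}$ and within $\mathcal{B}$, but one has to argue carefully that no tie created under perturbation can promote a $\mathcal{B}$-scenario to replace $s^*(\hat{u})$ while still preserving the $\eta$-coverage condition; once this is settled, the differentiation itself is a routine application of linearity and the chain rule.
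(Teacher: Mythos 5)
Your proposal follows essentially the same route as the paper's proof: use continuity of the $F^s$ and the no-tie hypothesis to fix a neighborhood of $\hat{u}$ on which $s^*(u)$ and the exceedance set $\{s: F^s(u) > F^{s^*(u)}(u)\}$ are unchanged, so that $F_C$ locally coincides with a fixed smooth combination of the scenario objectives, and then differentiate term by term. The step you flag as the main obstacle --- carefully ruling out that a perturbation reorders scenarios in a way that changes $s^*$ --- is precisely the step the paper itself asserts with minimal detail, so your argument matches the paper's both in structure and in its level of rigor at that point.
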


\begin{proof}

Because the functions $F_X(X_T)$ and $X_T^s(u)$ are differentiable for each scenario $s=1,\ldots,S$, the function $F^s(u)$ is differentiable and thus continuous.
Because the objective function $F^s(u)$ is continuous, given a control variable point $\hat{u}$, for any $\omega > 0$, there exists a distance $r_s(\hat{u}, \omega)$ for each scenario $s$ such that for any $\|u-\hat{u}\|\leq r_s(\hat{u}, \omega)$, we have $|F^s(u) - F^s(\hat{u})|< \omega$.
Choosing $\omega = \min_{s\neq s^*(\hat{u})} |F^s(\hat{u}) - F^{s^* (\hat{u})} (\hat{u})|$, from the assumption that $\forall s\neq s^*(\hat{u})$, $F^s(\hat{u})\neq F^{s^* (\hat{u})} (\hat{u})$, we have $\omega > 0$.
Define $r = \min_{s=1,\ldots,S} r_s(\hat{u}, \omega/2)$. Then, for any $u$ such that $|u - \hat{u}|\leq r$, $s^*(u)=s^*(\hat{u})$. We prove this claim by the following statements that
\begin{align}
    F^s(u)>F^s(\hat{u}) - \frac{\omega}{2} \geq F^{s^*(\hat{u})} (\hat{u}) + \omega - \frac{\omega}{2} = F^{s^*(\hat{u})}(\hat{u}) +\frac{\omega}{2} > F^{s^*(\hat{u})}(u),\ \forall s:F^s(\hat{u}) > F^{s^*(\hat{u})}(\hat{u})\\
    F^s(u)<F^s(\hat{u}) + \frac{\omega}{2} \leq F^{s^*(\hat{u})} (\hat{u}) - \omega + \frac{\omega}{2} = F^{s^*(\hat{u})}(\hat{u}) -\frac{\omega}{2} < F^{s^*(\hat{u})}(u),\ \forall s:F^s(\hat{u}) < F^{s^*(\hat{u})}(\hat{u}).
\end{align}
For both formulas, the first and last inequalities follow from the continuity of $F^s(u)$, and the other inequalities follow from the definition of $\omega$.
Now we show that $s^*(\hat{u})$ is still the scenario number with the largest original objective value such that $\displaystyle \sum_{s=1}^S p_s \mathbb{1}_{\{F^s(u) > F^{s^*(u)}(u)\}} \geq \eta$, which means that $s^*(u)=s^*(\hat{u})$. Furthermore, we show that $\{s:F^s(u)>F^{s^*(u)}(u)\} = \{s: F^s(\hat{u}) > F^{s^* (\hat{u})}(\hat{u})\}$.
Therefore, the derivative of $F_C(u)$ at point $\hat{u}$ is
\begin{align}
    \frac{\partial F_C(\hat{u})}{\partial \hat{u}} & = \lim_{u\rightarrow \hat{u}} \frac{F_C(u) - F_C(\hat{u})}{u - \hat{u}}\nonumber\\
    & = \lim_{u\rightarrow \hat{u}} \left(\frac{F^{s^*(\hat{u})}(u) - F^{s^*(\hat{u})}(\hat{u})}{u - \hat{u}} +
    \frac{1}{\eta} \sum_{s:F^s(u) > F^{s^*(\hat{u})}(u)} p_s \frac{F^s(u) - F^s(\hat{u})-F^{s^*(\hat{u})}(u) + F^{s^*(\hat{u})}(\hat{u})}{u - \hat{u}}\right)\nonumber\\
    & = \left(1 - \frac{1}{\eta} \sum_{s:F^s(\hat{u})>F^{s^* (\hat{u})}(\hat{u})} p_s\right) \frac{\partial F^{s^*(\hat{u})} (\hat{u})}{\partial \hat{u}} + \frac{1}{\eta} \sum_{s:F^s(\hat{u})>F^{s^* (\hat{u})}(\hat{u})} p_s \frac{\partial F^s(\hat{u})}{\partial \hat{u}}.
\end{align}
\end{proof}
When the conditions in~\Cref{thm:derivative-cvar} hold,
with the derivative of $F_C(u)$, we compute the derivative of the objective function $F_R(u)$ in~\ref{eq:model-sp-sample-r} by the chain rule as
\begin{align}
\label{eq:derivative}
    \frac{\partial F_R(u)}{\partial u_{jk}} = \alpha \sum_{s=1}^S p_s \frac{\partial F^s(u)}{\partial u_{jk}} + (1-\alpha) \frac{\partial F_C(u)}{\partial u_{jk}} + 2\theta \left( \sum_{j=1}^N u_{jk} - 1\right),\ j=1,\ldots,N,\ k=1,\ldots,T,
\end{align}
where the gradient of the original objective functions $F^s(u)$ for every scenario $s=1,\ldots,S$ depends on specific quantum problems and can be computed by the popular GRAPE algorithm~\citep{khaneja2005optimal}.
We apply two optimization methods, L-BFGS-B~\citep{byrd1995limited} and Adam~\citep{kingma2014adam}, with our derived gradient for $F_R(u)$ in~\eqref{eq:derivative} to solve the continuous relaxation of the stochastic optimization model.
In numerical studies we empirically show that L-BFGS-B is better for quantum problems aiming to minimize the energy of a quantum state, while Adam performs better on problems minimizing the infidelity compared with a target quantum operator.

\paragraph{L-BFGS-B algorithm} L-BFGS-B is a widely used quasi-Newton method for optimizing unconstrained models with deterministic objective functions. We first generate $S$ samples for the uncertain parameters $\xi$, then apply L-BFGS-B to solve \ref{eq:model-sp-sample-r}. Specifically, during each iteration of L-BFGS-B, we compute the derivative using~\eqref{eq:derivative} and the search direction, then conduct a line search to update control variables, following the details in~\citet{byrd1995limited,zhu1997algorithm}.

\paragraph{Adam method} Adam is a popular first-order gradient-based optimization method for optimizing unconstrained models with stochastic objective functions~\citep{kingma2014adam}.
We modify Adam to solve our problem with a bounded feasible region by adding a projection step.
The details of the algorithm are presented in \Cref{alg:adam} (where $\otimes$ represents elementwise multiplication between two vectors).
Specifically, during each iteration we first generate $S$ samples of the uncertain parameters $\xi$ to formulate the corresponding continuous relaxation \ref{eq:model-sp-sample-r} (see \Cref{algline:sample-model}). Then we compute the derivative by \Cref{eq:derivative} (see \Cref{algline:derivative}), update the control variables, and project the updated variables to the feasible region $[0,1]^{N\times T}$ (see \Crefrange{algline:update-var-start}{algline:update-var-end}).
\begin{algorithm}[!ht] \caption{Adam for solving the continuous relaxation of the stochastic optimization model. \label{alg:adam} }
    \DontPrintSemicolon
    \SetNoFillComment
    \KwInput{Initial control values $u^{(0)}$. Maximum iteration number $K$.\;}
    \KwInput{Step size $\gamma_1> \gamma_2>0$. Objective value threshold to change step size $\bar{F}$.\;}
    \KwInput{Decay rates for the moment estimates $\beta_1,\ \beta_2\in [0, 1)$. Constant for numerical computation $\epsilon$.\;
    }
    Initialize the first and second moment vectors $m^{(0)},\ v^{(0)}$ as $N\times T$-dimensional zero vectors.\; Initialize step size $\hat{\gamma}=\gamma_1$.\;
    \For{Iteration $i=1,\ldots,K$}
    {
        Generate a sample set $\mathcal{S}=\{\xi^1,\ldots,\xi^S\}$ with probability $p_s$ for $\xi^s,\ s=1,\ldots,S$ and formulate the unconstrained sample approximation \ref{eq:model-sp-sample-r}.\;
        \label{algline:sample-model}
        Compute the corresponding derivative $\displaystyle g^{(i)}=\frac{\partial F_R(u)}{\partial u^{(i-1)}}$ by Equation~\eqref{eq:derivative}.\;
         \label{algline:derivative}
        Compute the first moment vector $\displaystyle m^{(i)} = \left(\beta_1 m^{(i-1)} + (1-\beta_1) g^{(i)}\right)$.\;
        \label{algline:update-var-start}
        Compute the second moment vector $\displaystyle v^{(i)} = \left(\beta_2 v^{(i-1)} + (1-\beta_2) g^{(i)} \otimes g^{(i)}\right)$.\;
        Compute the bias-corrected moment vectors $\displaystyle \hat{m}^{(i)} = m^{(i)}/ (1 - \beta_1^i),\ \hat{v}^{(i)} = v^{(i)} / (1 - \beta_2^i)$.\;
        Update control variables and project to the feasible bounded region $[0,1]^{N\times T}$ as $u^{(i)} = \Pi_{[0,1]^{N\times T}} \left(u^{(i-1)}-\hat{\gamma} \hat{m}^{(i)}/(\sqrt{\hat{v}^{(i)}} + \epsilon)\right)$.\;
        \label{algline:update-var-end}
        \If{$F_R(u^{(i)})<\bar{F}$}
        {$\hat{\gamma} = \gamma_2$.}
    }
    \KwOutput{Continuous control solutions $u^{(K)}$.}
\end{algorithm}

\subsection{Sum-Up Rounding Technique}
\label{sec:rounding}
With continuous solutions $u^\mathrm{con}\in [0,1]^{N\times T}$, we apply the sum-up rounding (SUR) technique to obtain binary solutions $u^\mathrm{bin}$. The SUR technique is proposed by the work of \citet{sager2012integer} and is widely used in integer control optimization problems.
To the best of our knowledge, most work using SUR rounds either a continuous-time control function~\citep{sager2011combinatorial,sager2021mixed} or controls of the continuous relaxation with the same time discretization~\citep{leyffer2021convergence,manns2020multidimensional,Fei2023binarycontrolpulse,fei2023switching}.

In our problem, the time of solving the continuous relaxation is the major part of the overall computational time and significantly increases when the number of time steps $T$ and scenarios $S$ is high.
Therefore, we solve the continuous relaxation using fewer time steps $T$ and round the solutions using more time steps $T_R$ to achieve a balance between computational time and the difference between continuous and binary solutions.
For simplicity, we assume here that $T_R=C_\mathrm{SUR}T$, where $C_\mathrm{SUR} > 1$ is a predetermined integer constant.
We present the rounding algorithm procedure in \Cref{alg:sur}.

\begin{algorithm}[!ht] \caption{Sum-up rounding for continuous and binary solutions with different time steps. \label{alg:sur} }
    \DontPrintSemicolon
    \SetNoFillComment
    \KwInput{Time steps of continuous solution $T$. The multiplier factor between time steps of continuous and binary solutions $C_\mathrm{SUR}$. Continuous control $u^\mathrm{con}\in [0,1]^{N\times T}$. }
    \For{$k=1,\ldots,C_\mathrm{SUR} T$}
    {\For{$j=1,\ldots,N$}
    {Compute cumulative deviation
    $\displaystyle \delta_{jk} = \sum_{\tau=1}^k u^\mathrm{con}_{j\lfloor \tau /C_\mathrm{SUR} \rfloor} \frac{\Delta t} {C_\mathrm{SUR}} -\sum_{\tau=1}^{k-1} u^\mathrm{bin}_{j\tau } \frac{\Delta t} {C_\mathrm{SUR}}$.}
    Let $j^* = \argmax_{j=1,\ldots,N} \delta_{jk}$, breaking ties by by choosing the smallest index.\;
    \label{algline:sur-select}
       Update binary control $u^\mathrm{bin}_{j^*k}=1$ and $u^\mathrm{bin}_{jk}=0,\ \forall j\neq j^*$.
       \label{algline:sur-update}
    }
    \KwOutput{Binary control $u^\mathrm{bin}\in \{0,1\}^{N\times C_\mathrm{SUR}T}$.}
\end{algorithm}
\begin{remark}
    \Cref{alg:sur} can be extended to a more general case when the SOS1 property is not required for controls by only changing~\Crefrange{algline:sur-select}{algline:sur-update}.
    We set $u^\mathrm{bin}_{jk}=1,\ \forall j\in J^*$ where $J^*=\{\delta_{jk}\geq 0.5 \Delta t / C_\mathrm{SUR}\}$ and other controller values as $0$.
\end{remark}
In the rest of this section we discuss how the difference between continuous and binary controls varies with time steps $T$.
In the remaining discussion we use $u$ to represent all the discretized controls and $u(t)$ to represent all the control functions on a continuous time horizon (i.e., $T= \infty$).
We first propose two assumptions for the original problem, which are satisfied in most quantum control problems.
\begin{assumption}
\label{assump:1-obj}
    We assume that the original objective function for each quantum system $F_X$ is continuous, non-negative, and upper-bounded.
\end{assumption}
\begin{assumption}
\label{assump:2-feasible}
     We assume that the stochastic optimization model \ref{eq:model-sp-sample} is feasible.
\end{assumption}
We define piecewise constant control functions $u^\mathrm{con}(t)$ and $u^\mathrm{bin}(t)$ as equivalent formulations to discretized controls $u^\mathrm{con}$ and $u^\mathrm{bin}$:
\begin{subequations}
    \begin{align}
        & u^\mathrm{con}_j(t) = u^\mathrm{con}_{jk},\ \forall t\in [(k-1)\Delta t, k\Delta t),\ j=1,\ldots,N,\ k=1,\ldots,T.\\
        & u^\mathrm{bin}_j(t) = u^\mathrm{bin}_{jk},\ \forall t\in [(k-1)\frac{\Delta t}{C_\mathrm{SUR}}, k\frac{\Delta t}{C_\mathrm{SUR}}),\ j=1,\ldots,N,\ k=1,\ldots,C_\mathrm{SUR}T.
    \end{align}
\end{subequations}
In the following theorem we discuss the cumulative difference between continuous and binary control functions:
\begin{theorem}
\label{thm:cum-diff-bound}
    With \Crefrange{assump:1-obj}{assump:2-feasible}, let $\FUB$ be the upper bound of $F_X$. Then the cumulative difference between continuous and binary controls at any time $t$ satisfies
    \begin{align}
    \label{eq:cum-diff}
         \left\|\int_{0}^t \left( u^\mathrm{con}(\tau) - u^\mathrm{bin}(\tau)\right)d\tau \right\|_\infty & \leq \frac{\left(N-1\right)}{C_\mathrm{SUR}}\Delta t + \frac{2N - 1}{N} \sqrt{t_fF_L(u^\mathrm{con})\Delta t},\ \forall t\in [0,t_f],
    \end{align}
    where $F_L$ is defined in \eqref{eq:F_L_def}.
    Furthermore, we have the following convergence results for the objective values defined in~\eqref{eq:model-sp-sample-obj-convert} of continuous and binary solutions:
    \begin{align}
        \lim_{\Delta t \rightarrow 0} \tilde{F}(u^\mathrm{con}) =  \lim_{\Delta t\rightarrow 0} \tilde{F}(u^\mathrm{bin}).
    \end{align}
\end{theorem}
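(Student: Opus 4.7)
The plan is to extend the classical sum-up-rounding analysis of \citet{sager2012integer} to accommodate the two novel features of our setting: (i) the binary schedule lives on the finer grid of step $\Delta t/C_\mathrm{SUR}$, and (ii) $u^\mathrm{con}$ satisfies SOS1 only approximately, the violation being quantified by $F_L(u^\mathrm{con})$. Writing $\gamma_{jk'} = \int_0^{k'\Delta t/C_\mathrm{SUR}}\bigl(u^\mathrm{con}_j(\tau)-u^\mathrm{bin}_j(\tau)\bigr)d\tau$ for $j=1,\ldots,N$ and $k'=1,\ldots,C_\mathrm{SUR}T$, I would introduce an auxiliary SOS1-feasible control $\bar u^\mathrm{con}_{jk}=u^\mathrm{con}_{jk}+(1-\sum_l u^\mathrm{con}_{lk})/N$ and split $\gamma_{jk'}$ via the triangle inequality into an SOS1-projection error and a rounding error, bounding each separately.

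For the SOS1-projection error, Cauchy--Schwarz yields $\sum_{k=1}^T|\sum_j u^\mathrm{con}_{jk}-1|\Delta t \leq \sqrt{T F_L(u^\mathrm{con})}\,\Delta t = \sqrt{t_f F_L(u^\mathrm{con})\Delta t}$, and since the per-step deviation is spread uniformly by the factor $1/N$, the integrated projection error in any single component is bounded by $\sqrt{t_f F_L(u^\mathrm{con})\Delta t}/N$. For the rounding error I would carry out the standard telescoping induction on the fine index $k'$: the greedy assignment in \Cref{algline:sur-select} of \Cref{alg:sur} guarantees that the controller with largest positive cumulative deviation absorbs the next fine-step mass, which in the SOS1-feasible regime caps $|\gamma_{jk'}|$ at $(N-1)\Delta t/C_\mathrm{SUR}$. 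The subtlety is that SUR is actually run on $u^\mathrm{con}$ rather than on $\bar u^\mathrm{con}$, so one must add an extra term of order $2(N-1)/N\cdot\sqrt{t_f F_L(u^\mathrm{con})\Delta t}$ capturing how the greedy choice on $u^\mathrm{con}$ can differ from the one that would have been made on $\bar u^\mathrm{con}$. Combining the $1/N$ projection term with the $2(N-1)/N$ greedy-discrepancy term gives the $(2N-1)/N$ coefficient in~\eqref{eq:cum-diff}.

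For the convergence statement, a Duhamel stability bound applied to the Schr\"odinger propagator gives $\|X_T^s(u^\mathrm{con})-X_T^s(u^\mathrm{bin})\|$ controlled by a constant multiple of $\bigl\|\int_0^{t_f}(u^\mathrm{con}-u^\mathrm{bin})d\tau\bigr\|_\infty$ for each scenario $s$; continuity and boundedness of $F_X$ under \Cref{assump:1-obj} then transfer this bound to $|\tilde F(u^\mathrm{con})-\tilde F(u^\mathrm{bin})|$. Both summands on the right-hand side of~\eqref{eq:cum-diff} vanish as $\Delta t\to 0$---the first trivially, the second provided $F_L(u^\mathrm{con})\Delta t\to 0$, which is expected for a sequence of (approximate) minimizers of~\ref{eq:model-sp-sample-r} with fixed penalty $\theta$ since $F_L$ is nonnegative and uniformly bounded on the feasible box. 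The principal obstacle is bookkeeping the two error sources without double counting in the SUR induction: the classical argument leans heavily on $\sum_j \delta_{jk'}=0$ after each update so that max and min entries balance, and without exact SOS1 this identity is replaced by a drift of magnitude proportional to the running penalty violation, so the greedy induction must be re-examined carefully to confirm that this drift enters only additively (not multiplicatively) into the per-controller bound.
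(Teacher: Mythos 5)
Your treatment of the bound \eqref{eq:cum-diff} is consistent with the paper and in fact more explicit than what the paper writes: the paper's proof merely rewrites the integral as a finite sum over the fine grid and then defers entirely to Theorem 4 and Corollary 1 of \citet{Fei2023binarycontrolpulse}, whereas you reconstruct that argument by introducing the SOS1-feasible surrogate $\bar u^\mathrm{con}$, bounding the projection error via Cauchy--Schwarz by $\sqrt{t_f F_L(u^\mathrm{con})\Delta t}/N$, and running the greedy SUR induction on the fine grid for the $(N-1)\Delta t/C_\mathrm{SUR}$ term; the bookkeeping $1/N + 2(N-1)/N = (2N-1)/N$ matches the stated constant. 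That part of the proposal is a reasonable stand-in for the omitted citation.

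The genuine gap is in the convergence claim. You justify $F_L(u^\mathrm{con})\Delta t \to 0$ by asserting that ``$F_L$ is nonnegative and uniformly bounded on the feasible box,'' but no such bound holds uniformly in $\Delta t$: on $[0,1]^{N\times T}$ the only a priori estimate is $F_L(u)\le T(N-1)^2 = t_f(N-1)^2/\Delta t$, which gives $F_L(u^\mathrm{con})\Delta t \le t_f(N-1)^2$, a constant that does not vanish, so the second summand in \eqref{eq:cum-diff} is not shown to go to zero and your argument stalls exactly at the step that matters. The missing ingredient is precisely where $\FUB$ and \Cref{assump:2-feasible} enter the paper's proof (neither appears anywhere in yours): feasibility of \ref{eq:model-sp-sample} supplies a comparator $u^f$ with $F_L(u^f)=0$, and since $F_C(u)\le F_{\mathrm{CVaR}}(u,\FUB)=\FUB$ one has $\tilde F(u^f)\le\FUB$; optimality of $u^\mathrm{con}$ for the penalized relaxation then yields $\theta F_L(u^\mathrm{con}) \le \tilde F(u^f) \le \FUB$, i.e.\ $F_L(u^\mathrm{con})$ is bounded by a constant independent of $\Delta t$, so $\sqrt{t_f F_L(u^\mathrm{con})\Delta t}=O(\sqrt{\Delta t})$. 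With that repaired, your Duhamel/Gronwall stability step (which should be stated in terms of $\sup_{t\in[0,t_f]}$ of the integrated control difference rather than only its value at $t_f$) together with continuity of $F_X$ and of the CVaR composition recovers the conclusion, which the paper itself obtains by invoking Corollary 8 of \citet{sager2012integer}.
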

\begin{proof}
    For any time interval length $\Delta t$, let $\hat{k}$ be the index of the time step in the SUR algorithm that $t$ falls in. Then the integral can be written as follows based on the definition of piecewise constant functions:
    \begin{align}
        \int_{0}^t \left( u^\mathrm{con}(\tau) - u^\mathrm{bin}(\tau)\right)d\tau = \sum_{\tau=0}^{\hat{k}} \left(u^\mathrm{con}_{j\lfloor \tau /C_\mathrm{SUR} \rfloor} - u^\mathrm{bin}_{j\tau}\right)\frac{\Delta t}{C_{\mathrm{SUR}}}.
    \end{align}
    The remaining proof of the upper bound in~\eqref{eq:cum-diff} directly follows the proof of Theorem 4 and Corollary 1 in~\citet{Fei2023binarycontrolpulse} with details being omitted here.
    For any control value $u$, we have
    \begin{align}
        F_C(u) = \min_\zeta F_{\mathrm{CVaR}} (u, \zeta) \leq F_{\mathrm{CVaR}} (u, \FUB) = \FUB.
    \end{align}
    Hence, the original risk measure function $\tilde{F}(u)$ defined in~\eqref{eq:model-sp-sample-obj-convert} is upper bounded by $\FUB$.
    Because $u^\mathrm{con}$ is the optimal solution of penalized continuous relaxation, for a feasible solution $u^f$ of model \ref{eq:model-sp-sample} we have
    \begin{align}
        \tilde{F}(u^\mathrm{con}) + F_L(u^\mathrm{con}) \leq \tilde{F}(u^f) + F_L(u^f) = \tilde{F}(u^f)\leq \FUB,
    \end{align}
    where the last equality follows by the fact that $u^f$ is a feasible solution of the model with the SOS1 property, so $F_L(u^f)=0$.
    Therefore, we have $F_L(u^\mathrm{con})\leq \FUB$, and the convergence of objective values directly follows Corollary 8 in~\citet{sager2012integer}.
\end{proof}

\Cref{thm:cum-diff-bound} proves that $F_L(u^\mathrm{con})\leq \FUB$ and the cumulative difference is upper bounded by $O(\sqrt{\Delta t})$.
In the following propositions we show that with additional assumptions the upper bound can be tightened.
We first introduce the infinite-dimension formulation (i.e., $T= \infty$) for the original objective function of a single scenario $s$:
\begin{align}
\label{eq:obj-single-infinite}
    F^s(u(t)) = F_X(X^s(t_f; u)),\ s=1,\ldots,S.
\end{align}
Note that we use $F^s(u)$ for the objective value of the discretized control and $F^s(u(t))$ for the objective value of the continuous-time control.

The operator $X^s(t_f;u)$ is the value of $X^s(t;u)$ at time $t_f$, and $X^s(t;u)$ is the solution for the following differential equation with given control functions $u(t)$:
\begin{align}
    \frac{d}{dt} X^s(t) = -i\left((1+\xi_0^s(t)) H^{(0)} + \sum_{j=1}^N (1+\xi_j^s(t))u_j(t) H^{(j)}\right)X^s(t),
\end{align}
where $\xi_0^s(t),\ldots,\xi_N^s(t)$ are time-dependent uncertain parameters.
We use the infinite-dimension objective function $F^s(u(t))$ in \eqref{eq:obj-single-infinite} to replace $F^s(u)$ in the stochastic objective function $\tilde{F}(u)$ defined in~\eqref{eq:form-great} and~\eqref{eq:model-sp-sample-obj-convert}.
We define $s^*(u(t))$ as the scenario number with the largest original objective value $F^{s^*(u(t))}$ such that $\displaystyle \sum_{s=1}^S p_s\mathbb{1}_{F^s(u(t))>F^{s^*(u(t))}(u(t))}\geq \eta$ (see the discretized version in \Cref{thm:form-cvar}). For simplicity, we use $s^*$ to replace $s^*(u(t))$ in the following formulation of the infinite-dimension stochastic objective function $\tilde{F}(u(t))$:
\begin{align}
    \label{eq:model-sp-sample-obj-conver-infinite}
    \tilde{F}(u(t)) & = \alpha \sum_{s=1}^S p_s F^s(u(t))\nonumber\\
    & + (1-\alpha) \left[F^{s^*}(u(t)) + \frac{1}{\eta} \sum_{s:F^s(u(t)) > F^{s^*} (u(t))} p_s\left(F^s(u(t)) - F^{s^*}(u(t))\right)\right],
\end{align}
With the definition of infinite-dimension stochastic objective function $\tilde{F}(u(t))$ in~\eqref{eq:model-sp-sample-obj-convert}, we define the infinite-dimension formulation with the SOS1 property for the stochastic optimization model \ref{eq:model-sp-sample} as
\begin{subequations}
        \makeatletter
\def\@currentlabel{SP-C}
\makeatother
    \label{eq:model-c-sp}
    \begin{align}
    \label{eq:model-c-sp-obj}
        (\mathrm{SP-C}) \quad
    \min_{u} \quad & \tilde{F}(u(t))\\
    \label{eq:model-c-sp-cons-sos1}
    \mathrm{s.t.}\quad   & \sum_{j=1}^N u_j(t)=1,\  a.e.\ t\in [0,t_f] \\
    \label{eq:model-c-sp-cons-binary}
    & u_{j}(t)\in \left\{0,1\right\},\ j=1,\ldots,N,\ a.e.\ t\in [0, t_f].
    \end{align}
\end{subequations}
The objective function~\eqref{eq:model-c-sp-obj} is the stochastic objective function defined in~\eqref{eq:model-sp-sample-obj-conver-infinite}.
Constraint~\eqref{eq:model-c-sp-cons-sos1} enforces that the control function holds the SOS1 property for $t\in [0,t_f]$.
Constraint~\eqref{eq:model-c-sp-cons-binary} indicates that the control function value is binary for $t\in [0,t_f]$~\citep{Wikipedia2023Almost}.
Every feasible solution of the discretized model \ref{eq:model-sp-sample} can be considered as a piecewise constant control function and thus is a feasible solution for the infinite-dimension formulation~\ref{eq:model-c-sp}.
For this model we impose the following assumption and derive an $O(\Delta t)$ upper bound for the cumulative difference based on it.
\begin{assumption}
    \label{assump:3-lb}
    We assume that there exists an optimal solution for the continuous relaxation of the infinite-dimension model with the SOS1 property~\ref{eq:model-c-sp}, represented by $u^{*,\mathrm{SOS1}}(t)$ such that the original objective value $\tilde{F}(u^{*, \mathrm{SOS1}}(t))=0$.
\end{assumption}
\begin{proposition}
\label{prop:cum-diff-ext1}
Recall that $\theta$ is the weight parameter of the SOS1 $L_2$ penalty function, with \Crefrange{assump:1-obj}{assump:3-lb}. Then we have the following bound for the cumulative difference:
    \begin{align}
    \label{eq:cum-diff-ext1}
        \left\|\int_{0}^t \left( u^\mathrm{con}(\tau) - u^\mathrm{bin}(\tau)\right)d\tau \right\|_\infty & \leq
        \left(\frac{\left(N-1\right)}{C_\mathrm{SUR}} + \frac{2N-1}{N\sqrt{\theta}} C_{\mathrm{diff}}\right) \Delta t ,\ \forall t\in [0,t_f],
    \end{align}
    where $C_{\mathrm{diff}}$ is a constant determined by control Hamiltonians and evolution time $t_f$.
\end{proposition}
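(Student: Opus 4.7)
The plan is to reduce Proposition~\ref{prop:cum-diff-ext1} to Theorem~\ref{thm:cum-diff-bound} by showing that, under Assumption~\ref{assump:3-lb}, the penalty term satisfies $F_L(u^\mathrm{con}) = O(\Delta t/\theta)$. If I can prove $F_L(u^\mathrm{con}) \leq C_\mathrm{diff}^2\,\Delta t/(\theta\, t_f)$ for some constant $C_\mathrm{diff}$ depending only on the control Hamiltonians and $t_f$, then substituting into the square-root term of \eqref{eq:cum-diff} immediately yields the sharper $O(\Delta t)$ bound stated in \eqref{eq:cum-diff-ext1}, with the $(N-1)/C_\mathrm{SUR}$ term carried over unchanged.

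The first step is to invoke Assumption~\ref{assump:3-lb} to pick a continuous-time SOS1-feasible control $u^{*,\mathrm{SOS1}}(t)$ with $\tilde F(u^{*,\mathrm{SOS1}}(t))=0$. From this I would construct a discretized companion $u^d\in[0,1]^{N\times T}$ on the grid of width $\Delta t$ that preserves the SOS1 property exactly, e.g.\ by selecting in each interval the controller that is active for the largest portion of that interval, so that $F_L(u^d)=0$. The second step is a standard time-discretization estimate for the Schr\"odinger evolution: comparing the time-ordered exponential driven by $u^{*,\mathrm{SOS1}}(t)$ with the Trotter product appearing in \eqref{eq:final-operator} driven by $u^d$, one obtains an operator-norm bound of the form $\|X_T^s(u^d) - X^s(t_f;u^{*,\mathrm{SOS1}})\| \leq C_s \Delta t$, where $C_s$ depends on $\sum_j \|H^{(j)}\|$, $|\xi^s|$, and $t_f$. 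Composing with the Lipschitz and bounded objective $F_X$ and summing over scenarios (the CVaR piece only reshuffles weights of individual $F^s$, so it is absorbed in the same $O(\Delta t)$ estimate) gives $\tilde F(u^d) \leq C_\mathrm{diff}^2 \Delta t / t_f$ for an appropriate constant.

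The third step uses optimality of $u^\mathrm{con}$ for \ref{eq:model-sp-sample-r}: since $u^d$ is feasible for the continuous relaxation, we have
\begin{equation*}
\tilde F(u^\mathrm{con}) + \theta\, F_L(u^\mathrm{con}) \;\leq\; \tilde F(u^d) + \theta\, F_L(u^d) \;=\; \tilde F(u^d) \;\leq\; C_\mathrm{diff}^2 \Delta t / t_f.
\end{equation*}
Non-negativity of $\tilde F$ from Assumption~\ref{assump:1-obj} then yields $F_L(u^\mathrm{con}) \leq C_\mathrm{diff}^2 \Delta t / (\theta\, t_f)$. Substituting into \eqref{eq:cum-diff} gives $\sqrt{t_f F_L(u^\mathrm{con}) \Delta t} \leq (C_\mathrm{diff}/\sqrt\theta)\,\Delta t$, which combined with the unchanged first term of \eqref{eq:cum-diff} is exactly \eqref{eq:cum-diff-ext1}.

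The main obstacle is the second step: establishing a clean $O(\Delta t)$ discretization bound between the exact time-ordered Schr\"odinger evolution under $u^{*,\mathrm{SOS1}}(t)$ and the piecewise-exponential Trotter formula used to define $X_T^s(u^d)$, while keeping track of how the constant $C_\mathrm{diff}$ depends on $\|H^{(j)}\|$, $t_f$, and the support of the uncertainty samples. The composition with the CVaR term is a minor subtlety, since $F_C$ is only piecewise smooth, but because we only need a continuity (not differentiability) bound, the elementary Lipschitz estimate $|F_C(u^d)-F_C(u^{*,\mathrm{SOS1}})| \leq \max_s |F^s(u^d) - F^s(u^{*,\mathrm{SOS1}})|$ on the relevant active set is enough to push the $O(\Delta t)$ bound through the full stochastic objective $\tilde F$.
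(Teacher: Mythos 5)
Your overall architecture is the one the paper uses: bound $F_L(u^\mathrm{con})\le \tilde F(u^d)/\theta$ by comparing against an SOS1-feasible discretized companion of $u^{*,\mathrm{SOS1}}(t)$, show $\tilde F(u^d)=O(\Delta t)$ via a Schr\"odinger stability estimate plus Lipschitz continuity of the objective and \Cref{assump:3-lb}, and substitute into the square-root term of \eqref{eq:cum-diff}. Your direct optimality comparison of $u^\mathrm{con}$ against $u^d$ (skipping the paper's intermediate $u^{\mathrm{con,SOS1}}$) is fine, and your Lipschitz handling of the CVaR term matches the paper's appeal to continuity of $\tilde F$.

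The genuine gap is in your construction of $u^d$. You propose to \emph{binarize} by ``selecting in each interval the controller that is active for the largest portion of that interval.'' This per-interval greedy choice destroys the property that your second step relies on: the stability estimate $\|X^s(t_f;u^d)-X^s(t_f;u^{*,\mathrm{SOS1}})\|\le C_s\Delta t$ (Theorem 2 of Sager et al.) requires the \emph{cumulative} control deviation $\sup_{t}\left\|\int_0^t\bigl(u^d(\tau)-u^{*,\mathrm{SOS1}}(\tau)\bigr)d\tau\right\|_\infty$ to be $O(\Delta t)$, and a per-interval majority vote does not deliver that. Concretely, if $u^{*,\mathrm{SOS1}}_1(t)\equiv 0.6$ and $u^{*,\mathrm{SOS1}}_2(t)\equiv 0.4$, your rule picks controller $1$ in every interval, so $\int_0^{t_f}(u^d_1-u^{*,\mathrm{SOS1}}_1)\,d\tau=0.4\,t_f$ independently of $\Delta t$, and $X(t_f;u^d)=e^{-iH^{(1)}t_f}$ versus $X(t_f;u^{*,\mathrm{SOS1}})=e^{-i(0.6H^{(1)}+0.4H^{(2)})t_f}$ differ by $O(1)$; hence $\tilde F(u^d)$ need not be $O(\Delta t)$ and step three collapses. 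The fix is to drop the binarization entirely: $u^d$ only needs to lie in $[0,1]^{N\times T}$ with $F_L(u^d)=0$, so take the interval average $u^{d}(t)=\frac{1}{\Delta t}\int_{(k-1)\Delta t}^{k\Delta t}u^{*,\mathrm{SOS1}}(\tau)\,d\tau$ as in \eqref{eq:def-ud}. This is fractional but still satisfies SOS1 exactly (the pointwise sum-to-one is preserved by averaging), its running integral agrees with that of $u^{*,\mathrm{SOS1}}$ at every grid point, and the within-cell deviation is at most $\Delta t$ as in \eqref{eq:int-bound-d-opt}, which is precisely what makes the $O(\Delta t)$ operator bound, and hence the rest of your argument, go through. (Alternatively, a binary $u^d$ would require a cumulative-deviation-tracking rounding such as SUR itself, not a per-interval greedy.)
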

\begin{proof}
    We only need to prove that there exists a constant $C_{\mathrm{diff}}$ such that
    $\sqrt{t_f F_L(u^\mathrm{con})}\leq C_{\mathrm{diff}} \sqrt{\Delta t}/\sqrt{\theta}$.
    We define $u^{\mathrm{con,SOS1}}$ as the optimal solution for the continuous relaxation of the discretized model with the SOS1 property \ref{eq:model-sp-sample}. From the optimality of $u^\mathrm{con}$ we have
    \begin{align}
    \Tilde{F}(u^\mathrm{con}) + \theta F_L(u^\mathrm{con}) \leq \Tilde{F}(u^\mathrm{con,SOS1}) + \theta F_L(u^\mathrm{con,SOS1}) = \Tilde{F}(u^\mathrm{con,SOS1}),
\end{align}
where the inequality follows the fact that $u^\mathrm{con,SOS1}$ holds the SOS1 property, so the penalty term $F_L(u^\mathrm{con,SOS1})=0$. Combining with the \Cref{assump:1-obj} that $\tilde{F}(u^{\mathrm{con}})\geq 0$, we have
    \begin{align}
    \label{eq:ub-penalty}
        F_L(u^\mathrm{con})\leq \frac{1}{\theta} \tilde{F}(u^\mathrm{con,SOS1}).
    \end{align}
    We then consider the difference in objective values between the optimal infinite-dimension relaxation solution $u^{*,\mathrm{SOS1}}(t)$ (defined in \Cref{assump:3-lb}) and the optimal discretized relaxation solution $u^\mathrm{con,SOS1}$.
    We first construct a piecewise constant control function $u^{d, \mathrm{SOS1}}(t)$ satisfying the inequality as
    \begin{align}
    \label{eq:def-ud}
        u^{d, \mathrm{SOS1}}(t) = \frac{1}{\Delta t} \int_{(k-1)\Delta t}^{k\Delta t} u^{*, \mathrm{SOS1}}(\tau) d\tau,\ \forall t\in [(k-1)\Delta t, k\Delta t),\ k=1,\ldots,T.
    \end{align}
    It is obvious that during each time interval we have
    \begin{align}
        \int_{(k-1)\Delta t}^{k\Delta t} u^{d, \mathrm{SOS1}}(\tau) d\tau = \int_{(k-1)\Delta t}^{k\Delta t} u^{*, \mathrm{SOS1}}(\tau) d\tau,\ k=1,\ldots,T.
    \end{align}
    For any time $t\in [0,t_f]$, let $\hat{k}$ be the index of time interval that $t$ falls in. Then we have
    \begin{align}
        \left\|\int_0^t u^{d, \mathrm{SOS1}}(\tau) d\tau - \int_0^tu^{*, \mathrm{SOS1}}(\tau) d\tau\right\|_\infty
         \leq \left\|\int_{(\hat{k}-1)\Delta t}^t u^{d, \mathrm{SOS1}}(\tau) d\tau - \int_{(\hat{k}-1)\Delta t}^t u^{*, \mathrm{SOS1}}(\tau) d\tau\right\|_\infty .
    \end{align}
    In the time subinterval $[(\hat{k}-1)\Delta t, \hat{k} \Delta t]$, the two integrals hold:
    \begin{subequations}
            \begin{align}
        \int_{(\hat{k}-1)\Delta t}^t u^{d, \mathrm{SOS1}}(\tau) d\tau \leq \max_{\tau \in [(\hat{k}-1)\Delta t, \hat{k}\Delta t]} u^{d, \mathrm{SOS1}}(\tau) \Delta t\\
        \int_{(\hat{k}-1)\Delta t}^t u^{*, \mathrm{SOS1}}(\tau) d\tau \geq \min_{\tau \in [(\hat{k}-1)\Delta t, \hat{k}\Delta t]} u^{*, \mathrm{SOS1}}(\tau) \Delta t .
    \end{align}
    \end{subequations}
    From the definition of $u^{d, \mathrm{SOS1}}$ in~\eqref{eq:def-ud}, we know that
    \begin{align}
        \displaystyle \max_{\tau \in [(\hat{k}-1)\Delta t, \hat{k}\Delta t]} u^{d, \mathrm{SOS1}}(\tau)\leq \max_{\tau \in [(\hat{k}-1)\Delta t, \hat{k}\Delta t]} u^{*, \mathrm{SOS1}}(\tau).
    \end{align}
    Therefore, for any $t \in [0, t_f]$,
    \begin{align}
    \label{eq:int-bound-d-opt}
        &\left\|\int_0^t u^{d, \mathrm{SOS1}}(\tau) d\tau - \int_0^tu^{*, \mathrm{SOS1}}(\tau) d\tau\right\|_\infty \nonumber\\\leq & \left\|\max_{\tau \in [(\hat{k}-1)\Delta t, \hat{k}\Delta t]} u^{*, \mathrm{SOS1}}(\tau) - \min_{\tau \in [(\hat{k}-1)\Delta t, \hat{k}\Delta t]} u^{*, \mathrm{SOS1}}(\tau)\right\|_\infty \Delta t.
    \end{align}
     We notice that the values of control functions $u^{*, \mathrm{SOS1}}(t)$ and $u^{d, \mathrm{SOS1}}(t)$ are both bounded by $[0,1]$. Hence, the difference of integral is upper bounded by $\Delta t$.
     From Theorem 2 in~\citet{sager2021mixed} we have
    \begin{align}
        \left\|X(t_f; u^{*, \mathrm{SOS1}}) - X(t_f; u^{d, \mathrm{SOS1}})\right\| \leq C'\Delta t,
    \end{align}
    where $C'$ is a constant determined by control Hamiltonians and evolution time.
    Combining with the continuity of the objective function $\tilde{F}(u(t))$, we have
    \begin{align}
        \tilde{F}(u^{d, \mathrm{SOS1}}(t)) - \tilde{F}(u^{*, \mathrm{SOS1}}(t))\leq C''\Delta t,
    \end{align}
    where $C''$ is a constant determined by objective function $\tilde{F}(u)$.
    From the definition of the piecewise constant control function $u^\mathrm{d,SOS1}(t)$, we can construct an equivalent discretized solution $u^\mathrm{d,SOS1}_k = u^\mathrm{d,SOS1}(t)$, where $k$ is the index of time interval that $t$ falls in.
    Because $u^{\mathrm{con}, \mathrm{SOS1}}$ is the optimal solution of the discretized formulation, it holds that
    \begin{align}
      \tilde{F}(u^{\mathrm{con}, \mathrm{SOS1}}) - \tilde{F}(u^{*, \mathrm{SOS1}}(t)) &\leq \tilde{F}(u^{d, \mathrm{SOS1}}) - \tilde{F}(u^{*, \mathrm{SOS1}}(t))\nonumber\\
        & = \tilde{F}(u^{d, \mathrm{SOS1}}(t)) - \tilde{F}(u^{*, \mathrm{SOS1}}(t))\leq C''\Delta t.
    \end{align}
    With \Cref{assump:3-lb} that $\tilde{F}(u^{*, \mathrm{SOS1}}(t))=0$ and the upper bound for $F_L(u^\mathrm{con})$ in~\eqref{eq:ub-penalty}, we prove that
    \begin{align}
        \sqrt{t_f F_L(u^\mathrm{con})} \leq C_\mathrm{diff} \sqrt{\Delta t}/\sqrt{\theta},
    \end{align}
    where $C_\mathrm{diff} = \sqrt{t_fC''}$.
\end{proof}

Furthermore, we prove in \Cref{prop:cum-diff-ext2}, with an additional \Cref{assump:4-continuous} for the infinite dimension model~\ref{eq:model-c-sp} that the second term of the cumulative difference is upper bounded by $o(\Delta t)\Delta t$, where $\lim_{\Delta t\rightarrow 0}o(\Delta t)=0$.
\begin{assumption}
    \label{assump:4-continuous}
    We assume that there exists a constant time interval length $\Delta t_0$ such that for any time discretization with $\Delta t\leq \Delta t_0$, the optimal solution $u^{*, \mathrm{SOS1}}(t)$ for the continuous relaxation of infinite dimension model with the SOS1 property~\ref{eq:model-c-sp} is continuous in each time subinterval.
\end{assumption}
\begin{proposition}
    \label{prop:cum-diff-ext2}
    With \Crefrange{assump:1-obj}{assump:4-continuous}, for any $\Delta t\leq \Delta t_0$ we have the following bound for the cumulative difference:
    \begin{align}
    \label{eq:cum-diff-ext2}
        \left\|\int_{0}^t \left( u^\mathrm{con}(\tau) - u^\mathrm{bin}(\tau)\right)d\tau \right\|_\infty & \leq
        \frac{\left(N-1\right)}{C_\mathrm{SUR}} \Delta t+ \frac{2N-1}{N\sqrt{\theta}} o(\Delta t) \Delta t ,\ \forall t\in [0,t_f],
    \end{align}
    where $\lim_{\Delta t\rightarrow 0} o(\Delta t)=0$.
\end{proposition}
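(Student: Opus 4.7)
The plan is to retrace the argument of \Cref{prop:cum-diff-ext1} almost verbatim, but to replace the crude estimate $\max - \min \le 1$ used on each subinterval (which came only from the box constraint $u(t)\in[0,1]$) by a sharper one coming from \Cref{assump:4-continuous}. Concretely, I would again start from the bound \eqref{eq:cum-diff} proved in \Cref{thm:cum-diff-bound}, so it suffices to control $\sqrt{t_f F_L(u^\mathrm{con})\Delta t}$ and show that it is bounded by $o(\Delta t)\,\Delta t/\sqrt{\theta}$ for some function with $\lim_{\Delta t\to 0}o(\Delta t)=0$.

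First, I would reuse the optimality inequality
$\tilde{F}(u^\mathrm{con})+\theta F_L(u^\mathrm{con})\leq \tilde{F}(u^\mathrm{con,SOS1})$
to get $F_L(u^\mathrm{con})\leq \tilde{F}(u^\mathrm{con,SOS1})/\theta$, and then construct the same piecewise-constant proxy $u^{d,\mathrm{SOS1}}(t)$ by averaging $u^{*,\mathrm{SOS1}}$ over each subinterval as in \eqref{eq:def-ud}. The key step is to bound
\[
\left\|\int_0^t u^{d,\mathrm{SOS1}}(\tau)d\tau-\int_0^t u^{*,\mathrm{SOS1}}(\tau)d\tau\right\|_\infty \leq \omega(\Delta t)\,\Delta t,
\]
where $\omega(\Delta t):=\max_{k}\bigl(\max_{\tau\in[(k-1)\Delta t,k\Delta t]}u^{*,\mathrm{SOS1}}(\tau)-\min_{\tau\in[(k-1)\Delta t,k\Delta t]}u^{*,\mathrm{SOS1}}(\tau)\bigr)$. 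Under \Cref{assump:4-continuous}, $u^{*,\mathrm{SOS1}}$ is continuous on each compact subinterval, hence uniformly continuous there, and a standard compactness argument (together with the fact that the discontinuities, if any, sit on the grid) yields $\omega(\Delta t)\to 0$ as $\Delta t\to 0$. This replaces the constant bound of $1$ used in \Cref{prop:cum-diff-ext1} by a vanishing quantity.

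Propagating this improved estimate through the same chain of inequalities as in \Cref{prop:cum-diff-ext1}, I would invoke Theorem~2 of \citet{sager2021mixed} to obtain
$\|X(t_f;u^{*,\mathrm{SOS1}})-X(t_f;u^{d,\mathrm{SOS1}})\|\leq C'\,\omega(\Delta t)\,\Delta t$, and then by continuity of $\tilde{F}$ a bound $\tilde{F}(u^{d,\mathrm{SOS1}})-\tilde{F}(u^{*,\mathrm{SOS1}})\leq C''\,\omega(\Delta t)\,\Delta t$. Combining with \Cref{assump:3-lb} (so that $\tilde{F}(u^{*,\mathrm{SOS1}}(t))=0$) and the optimality of $u^\mathrm{con,SOS1}$ for the discretized relaxation yields
$\tilde{F}(u^\mathrm{con,SOS1})\leq C''\,\omega(\Delta t)\,\Delta t$, and therefore $F_L(u^\mathrm{con})\leq C''\,\omega(\Delta t)\,\Delta t/\theta$. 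Substituting back into \eqref{eq:cum-diff} gives
\[
\sqrt{t_f F_L(u^\mathrm{con})\Delta t}\leq \sqrt{t_f C''}\,\sqrt{\omega(\Delta t)}\,\frac{\Delta t}{\sqrt{\theta}},
\]
and we define $o(\Delta t):=\sqrt{t_f C''\,\omega(\Delta t)}$, which tends to $0$ as $\Delta t\to 0$. Plugging this into \eqref{eq:cum-diff} delivers exactly \eqref{eq:cum-diff-ext2}.

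The main obstacle I anticipate is the precise handling of the oscillation $\omega(\Delta t)$: \Cref{assump:4-continuous} is phrased subinterval-by-subinterval, yet the argument needs a bound that is uniform across all $T=t_f/\Delta t$ subintervals simultaneously. Resolving this cleanly requires invoking that $u^{*,\mathrm{SOS1}}$ has only finitely many discontinuities which (for $\Delta t\leq \Delta t_0$) all lie at grid points, so that on the remainder $u^{*,\mathrm{SOS1}}$ is continuous on the compact set $[0,t_f]$ minus finitely many grid points, hence uniformly continuous on each subinterval with a common modulus of continuity. Beyond this, the remaining steps are direct translations of the computation in \Cref{prop:cum-diff-ext1}, with every occurrence of a constant replaced by the vanishing factor $\omega(\Delta t)$ or its square root.
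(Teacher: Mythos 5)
Your proposal is correct and follows essentially the same route as the paper's own (very terse) proof: sharpen the bound in \eqref{eq:int-bound-d-opt} using \Cref{assump:4-continuous} so that the per-subinterval oscillation of $u^{*,\mathrm{SOS1}}$ vanishes as $\Delta t\to 0$, and then rerun the chain of inequalities from \Cref{prop:cum-diff-ext1}. You actually supply more detail than the paper does, in particular the explicit propagation $o(\Delta t)=\sqrt{t_f C''\,\omega(\Delta t)}$ and the observation that a uniform modulus of continuity across all subintervals is needed (a point the paper's proof silently glosses over).
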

\begin{proof}
    We consider the upper bound for the cumulative difference between $u^{*,\mathrm{SOS1}}(t)$ and $u^{d,\mathrm{SOS1}}(t)$ in~\eqref{eq:int-bound-d-opt}. Because we assume that the optimal solution $u^{*,\mathrm{SOS1}}(t)$ is continuous in each subinterval in \Cref{assump:4-continuous}, the right-hand side is upper bounded by $o(\Delta t)\Delta t$. The other parts of the proof are the same as the proof of \Cref{prop:cum-diff-ext1}.
\end{proof}

The upper bound in \Cref{prop:cum-diff-ext2} indicates that the first term dominates the second term, and therefore increasing the multiplier factor for time steps in the SUR algorithm ($C_\mathrm{SUR}$) significantly reduces the cumulative difference between binary and continuous controls if \Crefrange{assump:1-obj}{assump:4-continuous} hold.

For a fixed number of time steps $T$, the optimal value of the continuous relaxation ($\tilde{F}(u^\mathrm{con})$) provides a lower bound for the binary model with the same time steps but not necessarily for the binary model with time steps $C_\mathrm{SUR}T$.
We provide a counterexample in the following remark.
\begin{remark}
We provide an example showing that the objective value of the continuous relaxation optimal solution $\tilde{F}(u^\mathrm{con})$ is larger than the objective value of the binary solution $\tilde{F}(u^\mathrm{bin})$ obtained by SUR with rounding time steps $C_\mathrm{SUR}T$.
    We consider a quantum control problem with zero noises as follows. The objective function is defined as
    \begin{align}
        1 - \frac{1}{4}\left|\operatorname{tr}\left\{X_\mathrm{targ}^\dagger X_T\right\}\right|,
    \end{align}
    where $X_\mathrm{targ}$ is
    \begin{align}
        X_\mathrm{targ} = \begin{pmatrix}
        1 & 0 & 0 & 0\\
        0 & 1 & 0 & 0\\
        0 & 0 & 0 & 1\\
        0 & 0 & 1 & 0
    \end{pmatrix}.
    \end{align}
    We set intrinsic and control Hamiltonians as
    \begin{align}
        H^{(0)} = \begin{pmatrix}
        1 & 0 & 0 & 0\\
        0 & -1 & 2 & 0\\
        0 & 2 & -1 & 1\\
        0 & 0 & 1 & 1
    \end{pmatrix},\ H^{(1)} = \begin{pmatrix}
        0 & 0 & 1 & 0\\
        0 & 0 & 0 & 1\\
        1 & 0 & 0 & 0\\
        0 & 1 & 0 & 0
    \end{pmatrix},\ H^{(2)} = \begin{pmatrix}
        0 & 0 & -i & 0\\
        0 & 0 & 0 & -i\\
        i & 0 & 0 & 0\\
        0 & i & 0 & 0
    \end{pmatrix}.
    \end{align}
    The initial operator $X_\mathrm{init}$ is a 4-dimensional identity matrix. The evolution time $t_f=8$.
    The number of time steps for continuous relaxation $T=1$ and the number of time steps for SUR $T_R=C_\mathrm{SUR}T=100$.
    We consider only one scenario $S=1$ and uncertainty $\xi_{jk}=0$ for all $j=0,1,2$ and $k=1$. The time-dependent Hamiltonians $H_k^s$ are computed by~\eqref{eq:model-sp-sample-cons-h}, and the time-dependent operators $X_k^s$ are computed by~\eqref{eq:model-sp-sample-cons-s}.
    By solving the model, we show that $\tilde{F}(u^\mathrm{bin})=0.673 < 0.678 = \tilde{F}(u^\mathrm{con})$.
\end{remark}

\section{Numerical Studies}
\label{sec:numeric}
We now apply the algorithms discussed in \Cref{sec:algorithm} to solve two quantum control problems with uncertain Hamiltonians: an energy minimization problem and a circuit compilation problem. In \Cref{sec:noise} we introduce our simulation setup for quantum systems with uncertain Hamiltonians.
In \Cref{sec:res-energy} we introduce the settings of the energy minimization problem and present the numerical results. In \Cref{sec:res-circuit} we introduce the circuit compilation problem and describe the numerical results.
All numerical simulations were conducted on a macOS computer with 8 cores, 16 GB RAM, and a 3.20 GHz processor.
The implementation was in Python with version 3.8. Our full code and results are available on our GitHub repository~\citep{coderobust2023}.

\subsection{Uncertainty Design}
\label{sec:noise}
Our proposed stochastic optimization model focues on quantum systems where the control Hamiltonians are inexact.
In realistic experiments, the control uncertainty of Hamiltonians varies among each simulation and each time step and has different distribution parameters for different Hamiltonians.
The variances of the uncertain parameters are larger across simulations and smaller within each time step during a single simulation.
Specifically, we assume that for each controller $j$ and each time step $k$, the uncertain parameter $\xi_{jk}$ follows a normal distribution $\mathcal{N}(\mu, \sigma_j^\mathrm{time})$, where $\mu$ is a random variable following a normal distribution $\mathcal{N}(0, \sigma_j^\mathrm{offset})$, with constant variance $\sigma_j^\mathrm{time}, \sigma_j^\mathrm{offset}$ determined by the Hamiltonians.

For each scenario $s=1,\ldots,S$, we generate the corresponding samples as follows.
We first sample a parameter $\mu^s_j$ from a normal distribution with mean value 0 and variance $\sigma^\mathrm{offset}_j$ for each Hamiltonian $H^{(j)},\ j=0,\ldots,N$, representing the mean value of uncertainties for Hamiltonian $H^{(j)}$ across all time steps for scenario $s$, defined as an offset.
We then sample $\xi^s_{jk}$ for each Hamiltonian $H^{(j)}$, for all $j=0,\ldots,N$ and time step $k=1,\ldots,T$ from a normal distribution with mean value $\mu^s_{j}$ and variance $\sigma^\mathrm{time}_j$.

We show the values of $10$ sampled scenarios of $\xi_{0k},\ \forall k=1,\ldots,T$ in \Cref{fig:uncertainty}. The different intercepts of lines reflect the variances of each simulation described by $\sigma_0^\mathrm{offset}$, and the fluctuation of each line indicates the variances among time steps for each simulation, described by $\sigma_0^\mathrm{time}$.

\begin{figure}[t]
    \centering
    \includegraphics[width=0.7\textwidth]{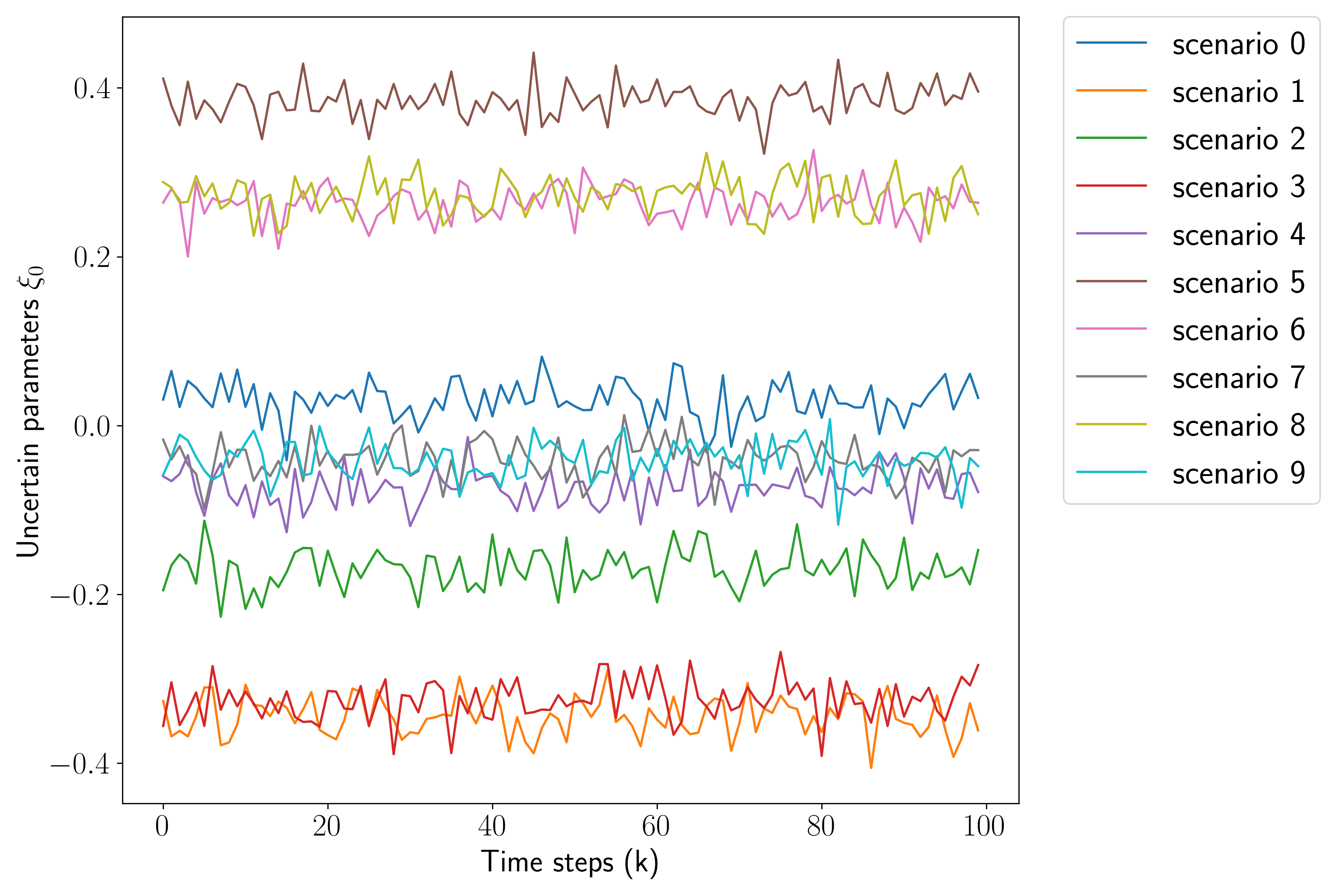}
    \caption{Sampled values of $\xi_{0}$ with $10$ scenarios. The x-axis is time step $k=1,\ldots,T$, and the y-axis is the value of $\xi_{0}$. The lines represent values of corresponding samples for each scenario $s=0,\ldots,9$.}
    \label{fig:uncertainty}
\end{figure}

\subsection{Energy Minimization Problem}
\label{sec:res-energy}
We apply L-BFGS-B to solve the stochastic optimization model of an energy minimization quantum control problem.
Consider a spin system with $q$ qubits, no intrinsic Hamiltonian, and two control Hamiltonians $H^{(1)},\ H^{(2)}$.
We define the initial state $|\psi_0\rangle$ as the ground state of $H^{(1)}$, which is the eigenvector of $H^{(1)}$ with minimum eigenvalue.
The goal is to minimize the energy corresponding to $H^{(2)}$ of the final state $X_T|\psi_0\rangle$.
Denote the theoretical minimum value of obtained energy as $E_{\mathrm{min}}$, which is the minimum eigenvalue of $H^{(2)}$.
The specific deterministic formulation is given by
\begin{subequations}
\label{eq:model-energy-c}
\begin{align}
\label{eq:model-energy-c-obj}
    \min\quad  \displaystyle & 1 - \left \langle \psi_0\right| X_T^\dagger H^{(2)} X_T \left|\psi_0 \right\rangle / E_{\mathrm{min}}\\
    \mathrm{s.t.}\quad %
                    & H_k = u_{1k}H^{(1)} + u_{2k}H^{(2)},\ k=1,\ldots,T\\
                    \label{eq:model-energy-h1}
                    & H^{(1)} = -\sum_{i=1}^q \sigma_i^x
                    ,\ H^{(2)} = \sum_{ij} J_{ij}\sigma_i^z \sigma_j^z\\
                    & X_{k}=e^{-i H_k \Delta t}X_{k-1},\ k=1,\ldots,T \nonumber \\
    & X_0 = X_\mathrm{init}\nonumber \\
                    & u_{1k} + u_{2k} =1,\ k=1,\ldots,T \\
                    & u_{1k},\ u_{2k} \in \left\{0,1\right\},\ k=1,\ldots,T,
\end{align}
\end{subequations}
where $\sigma_i^x,\ \sigma_i^z$ are Pauli matrices of qubit $i$ for $i=1,\ldots,q$. The matrix $[J_{ij}],\ i,j=1,\ldots,q$ is the adjacency matrix of a randomly generated graph with $q$ nodes. Specifically, $J$ is a symmetric matrix with zero diagonals and other elements randomly generated uniformly from the range $[-1,1]$ for $q> 2$. When $q=2$, $J$ is a symmetric matrix with zero diagonals and other elements as $1$.

We assume that $\sigma_j^\mathrm{time}=0.1\sigma_j^\mathrm{offset}$ for control Hamiltonians $j=1,\ 2$. We set the CVaR risk-level parameter $\eta=0.05$, the number of qubits $q=6$, the evolution time $t_f=5$, the number of time steps for solving the continuous relaxation $T=50$, and the number of time steps for rounding $T_R=200$.
We conduct out-of-sample tests for all controls under the same distribution as in-sample tests across $10$ groups, each with $500$ scenarios. We present the results of various numbers given by different scenario numbers, weight choices, and variance settings in \Crefrange{sec:res-energy-scenario}{sec:res-energy-variance}, respectively, and then discuss the CPU time of solving the problem with different sizes in \Cref{sec:res-energy-time}.

\subsubsection{Results of Scenarios}
\label{sec:res-energy-scenario}
We set the weight parameter $\alpha=0.5$, variance $\sigma_j^\mathrm{offset}=0.05$ for both controllers $j=1,\ 2$ and solve the stochastic optimization model with in-sample scenarios $S=1,\ 20,\ 100,\ 300,\ 500$.
\Cref{tab:energy-scenario} presents mean values, CVaR function values defined in~\eqref{eq:cvar-sample}, and weighted summation values of the mean and CVaR with weight $\alpha=0.5$.
The ``In-sample objective'' columns represent the results among in-sample scenarios generated to solve the model.
The ``Out-of-sample objective'' columns represent the results across $5,000$ independently generated samples for evaluating different control solutions.
The columns under ``Gap'' represent the gaps between in-sample and out-of-sample tests for all the function values.
\begin{table}[htbp]
  \centering
  \caption{Objective value for in-sample and out-of-sample tests and their gaps with a different number of scenarios, including mean values (``Mean''), CVaR function values (``CVaR''), and weighted summation (``Total'').}
    \begin{tabular}{r|rrr|rrr|rrr}
    \hline
    \multicolumn{1}{c|}{\multirow{2}[0]{*}{$S$}} & \multicolumn{3}{c|}{In-sample objective} & \multicolumn{3}{c|}{Out-of-sample objective} & \multicolumn{3}{c}{Gap} \\
    \cline{2-10}
          & Mean & CVaR  & Total & Mean & CVaR  & Total & Mean & CVaR  & Total \\
          \hline
    1     & 0.038 & 0.038 & 0.038 & 0.107 & 0.357 & 0.232 & 64.16\% & 89.27\% & 83.48\% \\
    20    & 0.100 & 0.206 & 0.153 & 0.141 & 0.410 & 0.276 & 29.07\% & 49.82\% & 44.50\% \\
    100   & 0.105 & 0.290 & 0.198 & 0.104 & 0.326 & 0.215 & $-$0.84\% & 10.96\% & 8.10\% \\
    300   & 0.102 & 0.286 & 0.194 & 0.108 & 0.309 & 0.208 & 5.67\% & 7.29\% & 6.87\% \\
    500   & 0.100 & 0.300 & 0.200 & 0.099 & 0.318 & 0.208 & $-$0.90\% & 5.64\% & 4.08\% \\
    \hline
    \end{tabular}%
  \label{tab:energy-scenario}%
\end{table}%
We show that generally the gap of all the objective values decreases when the number of scenarios increases and the gap of the CVaR function value is higher than the mean value.

\subsubsection{Results of Weight Parameter} \label{sec:res-energy-weight}
For the remaining tests we fix the in-sample size as $S=300$ and keep the out-of-sample size as 5,000 scenarios.
 We set the variance $\sigma_j^\mathrm{offset}=0.05$ for $j=1,2$ and solve the stochastic optimization model with different weight parameters $\alpha=0,\ 0.25,\ 0.5,\ 0.75,\ 1$.
 When $\alpha=0$, the model optimizes only the CVaR function; in comparison, when $\alpha=1$, the model optimizes only the expected value of the random objective.
\Cref{fig:energy-weight} presents how the values of mean and CVaR in out-of-sample tests vary depending on the weight parameter.
The blue line marked by dots represents the mean, and the orange line marked by triangles represents the CVaR.
Furthermore, we present box plots describing the objective values of 5,000 out-of-sample test scenarios for each weight parameter $\alpha$. The red lines, the box edges, and the caps represent the medians, the first to the third quartiles, and the whiskers based on the interquartile range, respectively (see~\citet{wickham201140} for details).

\begin{figure}[ht]
    \centering
    \includegraphics[width=0.55\textwidth]{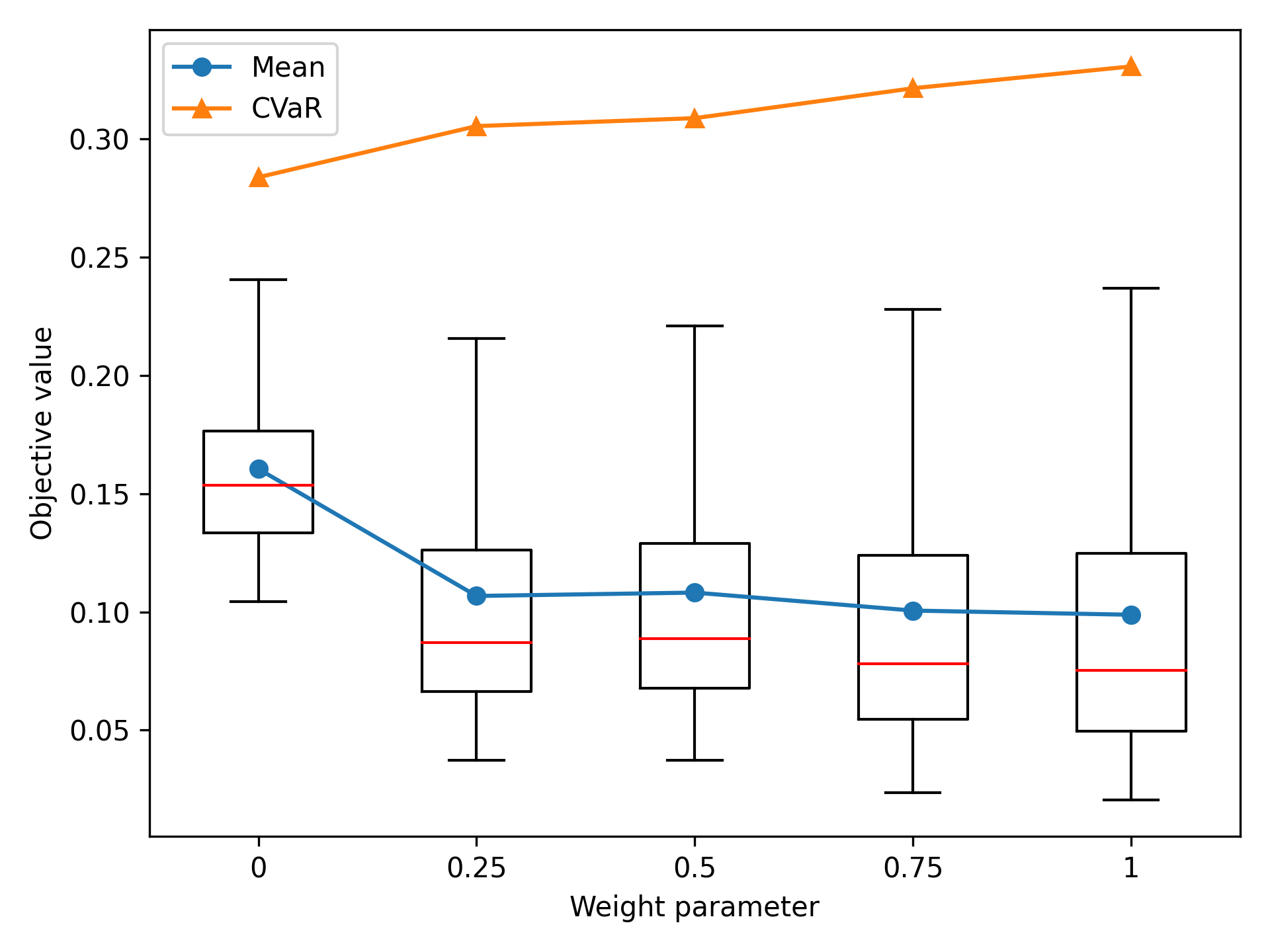}
    \caption{Objective values in out-of-sample tests with various weight parameters $\alpha$. The blue line marked by dots represents the mean value. The orange line marked by triangles represents the CVaR function value. Red lines, box edges, and caps represent medians, first and third quartiles, and whiskers~\citep{wickham201140}, respectively.}
    \label{fig:energy-weight}
\end{figure}

We see that when $\alpha$ increases, the out-of-sample mean values decrease while the CVaR function values increase because the objective function assigns more weight to the expectation.
Moreover, the box plots illustrate that decreasing $\alpha$ results in a reduced deviation, showing the advantages of incorporating risk aversion into the objective function.

In our out-of-sample tests we find that with the same offset $\mu^s_j$ for $s=1,\ldots,S$, the standard deviations of the out-of-sample objective value with different $\xi^s$ are always smaller than $0.005$.
Therefore, we focus on comparing the objective values with various offsets in our following discussion.
Using a derived control $u$ from a given weight parameter $\alpha$ and offsets $\mu_1,\ \mu_2\in [-0.5,0.5]$, we generate $20$ scenarios for $\xi$ following the normal distribution $\mathcal{N}(\mu_j, \sigma_j^{\mathrm{time}}),\ j=1,2$ and compute the objective value $F_X(X_T(u; \xi))$.
The average objective value is considered as the performance of control $u$ under a specific simulation uncertainty offset $(\mu_1,\mu_2)$.
In \Cref{fig:energy-func-noise} we select the risk-averse case ($\alpha=0$) and the risk-neutral case ($\alpha=1$) to present the figures of average objective value among 20 scenarios for offsets $\mu_1,\ \mu_2\in [-0.5,0.5]$.

\begin{figure}[ht]
    \centering
    \subfloat[Risk-averse ($\alpha=0$)]{\includegraphics[width=0.5\textwidth]{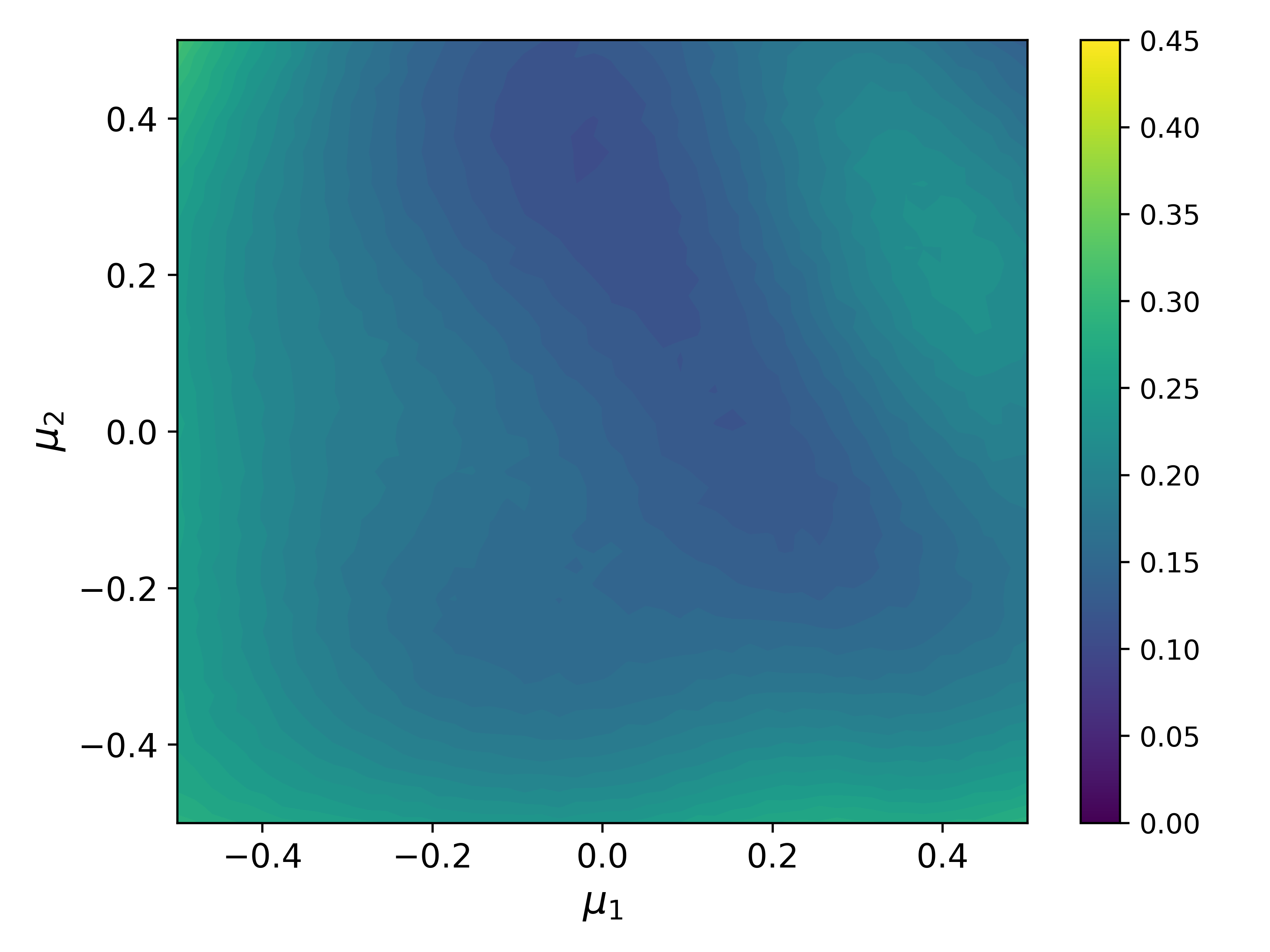}}
    \subfloat[Risk-neutral ($\alpha=1$)]{\includegraphics[width=0.5\textwidth]{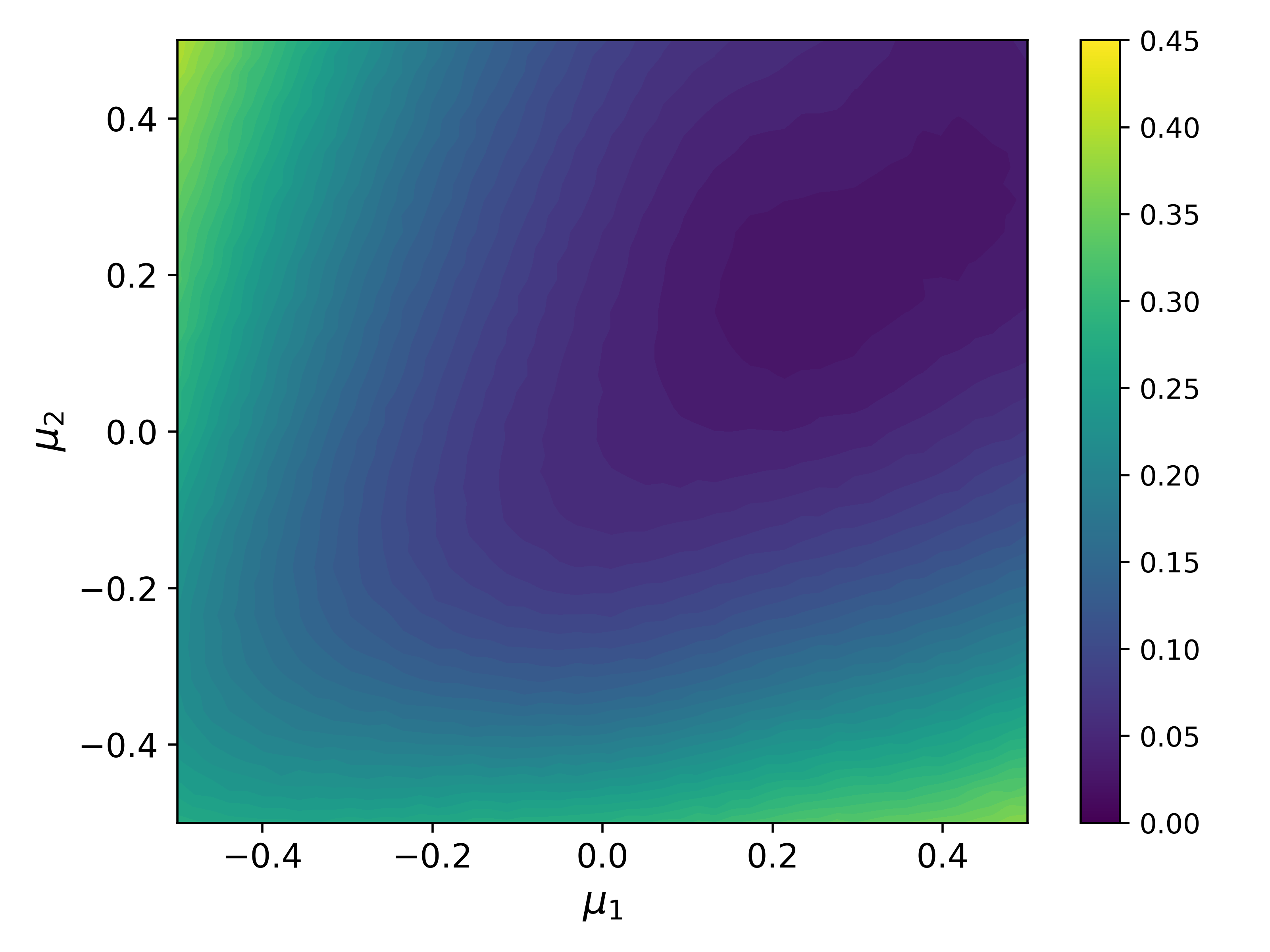}\label{fig:energy-func-noise-sp}}
    \caption{Average objective values among samples of uncertainty $\xi$ as a function of uncertainty offsets $\mu_1,\ \mu_2\in [-0.5,0.5]$.
    The control solutions are obtained from the stochastic optimization model with $\alpha=0,\ 1$ and variance as $0.05$.}
    \label{fig:energy-func-noise}
\end{figure}
We see that the control of the risk-neutral case attains a lower objective value when the uncertainty offsets are small, leading to better average performance. On the other hand, the control of the risk-neutral case has a significantly higher objective value when the uncertainty offsets are large, as shown in the upper-left and lower-right corners of \Cref{fig:energy-func-noise-sp}, while the control of the risk-averse case is more robust among all scenarios.

\subsubsection{Results of Variance}
\label{sec:res-energy-variance}

Again, we fix the number of scenarios $S=300$ and solve the stochastic optimization model with weight $\alpha=0,\ 0.5,\ 1$ for different choices of variances as $\sigma_1^\mathrm{offset},\ \sigma_2^\mathrm{offset}\in \{0.01,\ 0.05,\ 0.1\}$.
We evaluate the derived control solution using three metrics: the mean value, the CVaR function value, and the success rate in distinguishing states in the energy minimization problem. In this instance, a control successfully distinguishes the first excited state from the minimum energy state if its objective value is smaller than the energy difference ratio of these two states, which is one of our control design goals. The third metric is thus the percentage of scenarios in the out-of-sample tests that achieve this distinction, defined as the distinguished percentage (DP).

To more straightforwardly compare the performance of the stochastic optimization models with the deterministic model, we compute the percentage of change in metrics as $(me_{SP}-me_{D})/me_{D}$, where $me_{D}$ and $ me_{SP}$ represent the evaluation metric value of the deterministic model and stochastic optimization model, respectively.
Our goal is to achieve a lower quantum system energy and a higher success rate in distinguishing states. Therefore, for mean and CVaR, a negative percentage of change means a lower energy consumption and better performance compared with the deterministic model; in contrast,
for the DP, a positive percentage of change means a higher success rate of the distinction and better performance compared with the deterministic model.
We present the percentage of change in \Cref{tab:energy-var} and bold the best results of each variance setting.
Columns with ``$\alpha=1$,'' ``$\alpha=0$,'' and ``$\alpha=0.5$'' represent the results of the model optimizing the expectation, optimizing the CVaR function, and optimizing the weighted summation function, respectively.

For the mean value we show that the results with $\alpha=1$ are always better performing (in our tests) compared with the deterministic model, while the results with $\alpha=0$ have worse performance because the stochastic model focuses only on the tail distribution.
The balanced model with $\alpha=0.5$ performs worse with low variance but better with high variance.
For the CVaR function value we show that models with all the weights have better results and the model with $\alpha=0$ is the best, demonstrating an improvement in robustness when considering parameter uncertainty.
For the DP we show that the models with $\alpha=1$ and $\alpha=0.5$ are both
better than the deterministic model for all the variance settings, showing the
benefits of our stochastic optimization model. The model with $\alpha=0$
performs worse with high variance because it optimizes for scenarios with a
high error $\xi$ and sacrifices the performance in other scenarios.

\begin{table}[htbp]
  \centering
  \caption{Percentage change compared with the deterministic model in mean values (``Mean''), CVaR function values (``CVaR''), and percentage of successfully distinguishing first excited state of different offset variances. The results include the model optimizing mean (``$\alpha=0$''), CVaR function (``$\alpha=1$''), and weighted summation of two functions (``$\alpha=0.5$''). The in-sample and out-of-sample tests have the same distribution. The best results are bolded. }
  \begin{adjustbox}{width=\textwidth}
    \begin{tabular}{rr|rrr|rrr|rrrrr}
    \hline
    \multicolumn{1}{c}{\multirow{2}[0]{*}{$\sigma_1^\mathrm{offset}$}} & \multicolumn{1}{c|}{\multirow{2}[0]{*}{$\sigma_2^\mathrm{offset}$}} & \multicolumn{3}{c|}{Mean} & \multicolumn{3}{c|}{CVaR} & \multicolumn{3}{c}{DP} \\
   \cline{3-11}
    &       & \multicolumn{1}{r}{$\alpha=1$} & \multicolumn{1}{r}{$\alpha=0$} & \multicolumn{1}{r|}{$\alpha=0.5$} & \multicolumn{1}{r}{$\alpha=1$} & \multicolumn{1}{r}{$\alpha=0$} & \multicolumn{1}{r|}{$\alpha=0.5$} & \multicolumn{1}{r}{$\alpha=1$} & \multicolumn{1}{r}{$\alpha=0$} & \multicolumn{1}{r}{$\alpha=0.5$} \\
    \hline
    0.01  & 0.01  & \textbf{$-$3.43\%} & 5.29\% & $-$1.15\% & $-$4.10\% & \textbf{$-$5.24\%} & $-$1.92\% & \textbf{0.00\%} & \textbf{0.00\%} & \textbf{0.00\%} \\
    0.01  & 0.05  & \textbf{$-$5.30\%} & 38.85\% & 0.22\% & $-$4.90\% & \textbf{$-$17.55\%} & $-$14.95\% & 0.25\% & \textbf{1.44\%} & 1.03\% \\
    0.01  & 0.1   & \textbf{$-$8.85\%} & 80.32\% & 4.43\% & $-$7.21\% & \textbf{$-$26.08\%} & $-$17.66\% & 1.68\% & $-$9.27\% & \textbf{4.20\%} \\
    0.05  & 0.01  & \textbf{$-$8.65\%} & 25.35\% & 4.28\% & $-$9.30\% & \textbf{$-$29.15\%} & $-$16.33\% & 0.95\% & \textbf{2.57\%} & 1.53\% \\
    0.05  & 0.05  & \textbf{$-$9.06\%} & 47.67\% & $-$0.44\% & $-$7.50\% & \textbf{$-$20.61\%} & $-$13.62\% & 2.04\% & \textbf{3.08\%} & 2.89\% \\
    0.05  & 0.1   & \textbf{$-$11.45\%} & 64.49\% & $-$7.61\% & $-$7.29\% & \textbf{$-$19.86\%} & $-$12.18\% & 3.71\% & $-$24.40\% & \textbf{4.20\%} \\
    0.1   & 0.01  & \textbf{$-$11.66\%} & 38.76\% & $-$2.02\% & $-$8.96\% & \textbf{$-$31.38\%} & $-$18.00\% & 3.87\% & 7.97\% & \textbf{6.00\%} \\
    0.1   & 0.05  & \textbf{$-$12.12\%} & 33.77\% & $-$5.13\% & $-$7.87\% & \textbf{$-$23.16\%} & $-$12.37\% & 5.20\% & 0.74\% & \textbf{6.38\%} \\
    0.1   & 0.1   & \textbf{$-$14.26\%} & 55.23\% & $-$5.45\% & $-$6.87\% & \textbf{$-$17.99\%} & $-$10.52\% & 6.79\% & $-$72.22\% & \textbf{7.08\%} \\
    \hline
    \end{tabular}%
  \label{tab:energy-var}%
  \end{adjustbox}
\end{table}%

Furthermore, we observe that with a fixed uncertainty offset variance for one controller, increasing the variance of the other controller leads to higher mean values, higher CVaR function values, and lower DP,
because the uncertainty in the quantum system increases.
Increasing the uncertainty offset variance of the first controller $H^{(1)}$ has a larger negative impact on objective values compared with increasing the uncertainty variance of $H^{(2)}$, which means this quantum control system is more sensitive to the uncertainty of controller $H^{(1)}$.

We show the histogram of out-of-sample tests for both deterministic and stochastic optimization models and the zoomed-in tail distribution with a variance of $0.05$ in \Cref{fig:energy-var-histogram}. The blue and yellow histograms represent the results of the deterministic and the stochastic optimization models, respectively. The figures show that our stochastic optimization model obtains a lighter tail distribution.

\begin{figure}[ht]
    \centering
    \subfloat[All scenarios]{\includegraphics[width=0.5\textwidth]{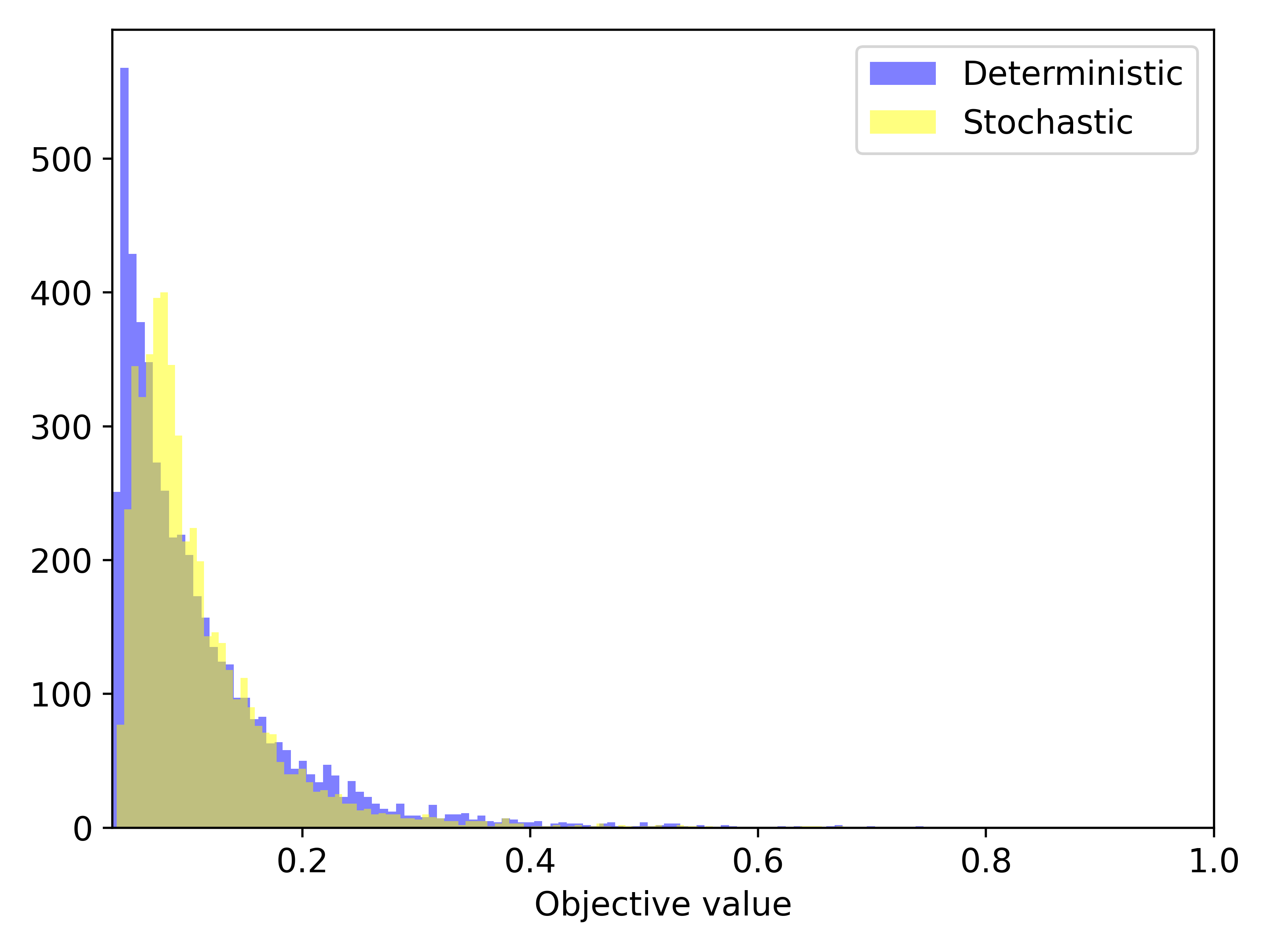}}
    \subfloat[Tail distribution]{\includegraphics[width=0.5\textwidth]{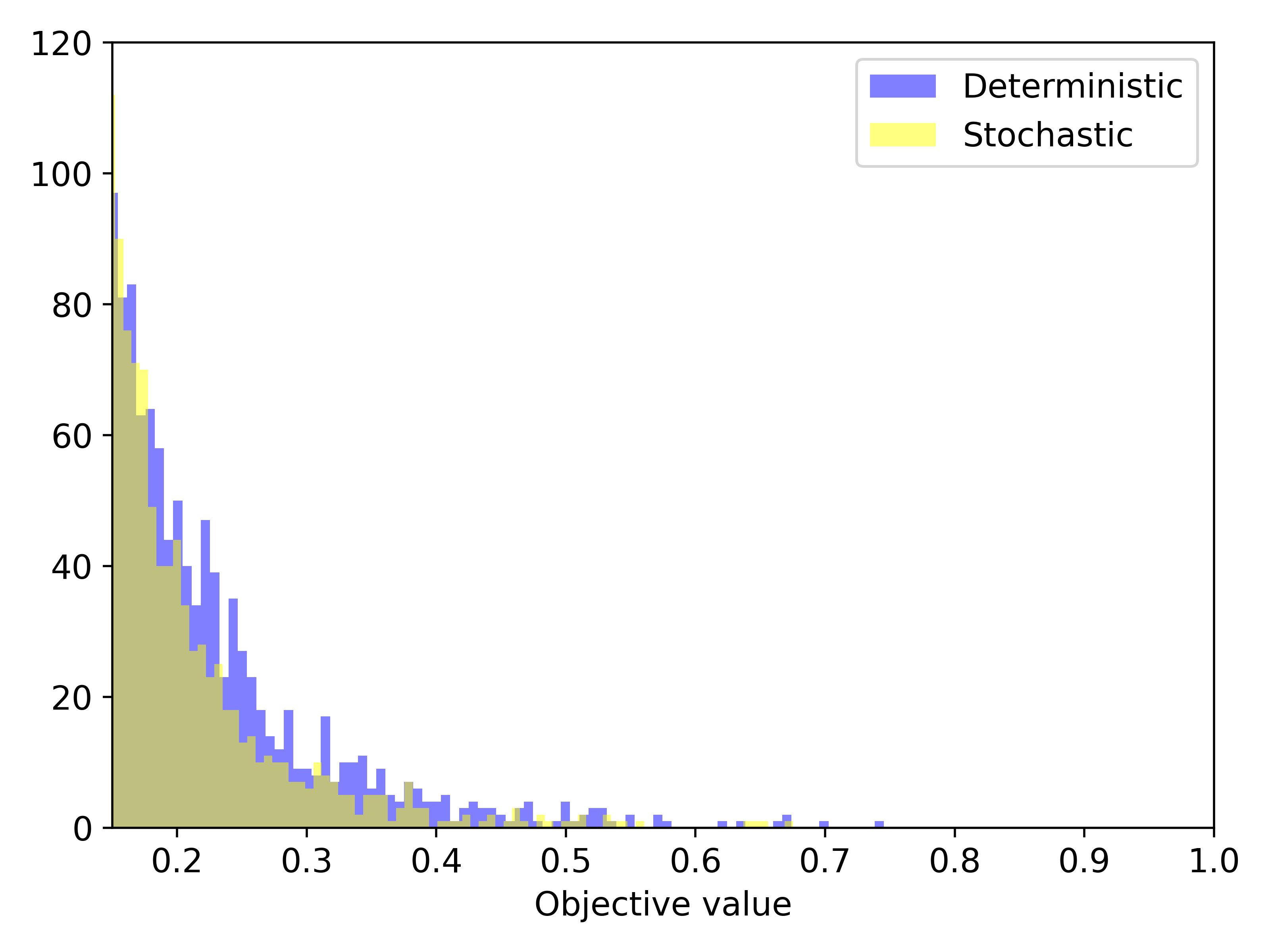}}
    \caption{Histograms of out-of-sample tests for the deterministic and stochastic optimization model with offset variance $0.05$ for both controllers. Blue and yellow histograms represent the results of the deterministic and the stochastic optimization model, respectively. %
    }
    \label{fig:energy-var-histogram}
\end{figure}

Similar to \Cref{fig:energy-func-noise}, for each obtained control $u$ and for every combination of offsets value $\mu_1,\ \mu_2\in [-1, 1]$, we generate $20$ different scenarios for $\xi$ with a normal distribution $\mathcal{N}(\mu_j,\sigma_j^\mathrm{time}),\ j=1,2$ and compute the average objective value $F_X(X_T(u; \xi))$.
The average objective value represents the performance of control $u$ under a specific simulation uncertainty offset $(\mu_1,\mu_2)$.
In \Cref{fig:energy-func-noise-dvssp} we present the average objective values for different offset values $\mu_1,\ \mu_2\in [-1,1]$ for $u$ obtained from the deterministic and stochastic optimization models with both offset variances set as $0.1$.
We show that although both controls have high objective value when $|\mu|$ goes to $1$, the control of the stochastic optimization model is more robust, especially for $(\mu_1,\ \mu_2)\in [-1,-0.75]\times[0.5,1]$ and $(\mu_1,\ \mu_2)\in [0.5,1]\times[-1,-0.75]$.

\begin{figure}[ht]
    \centering
    \subfloat[Deterministic]{\includegraphics[width=0.5\textwidth]{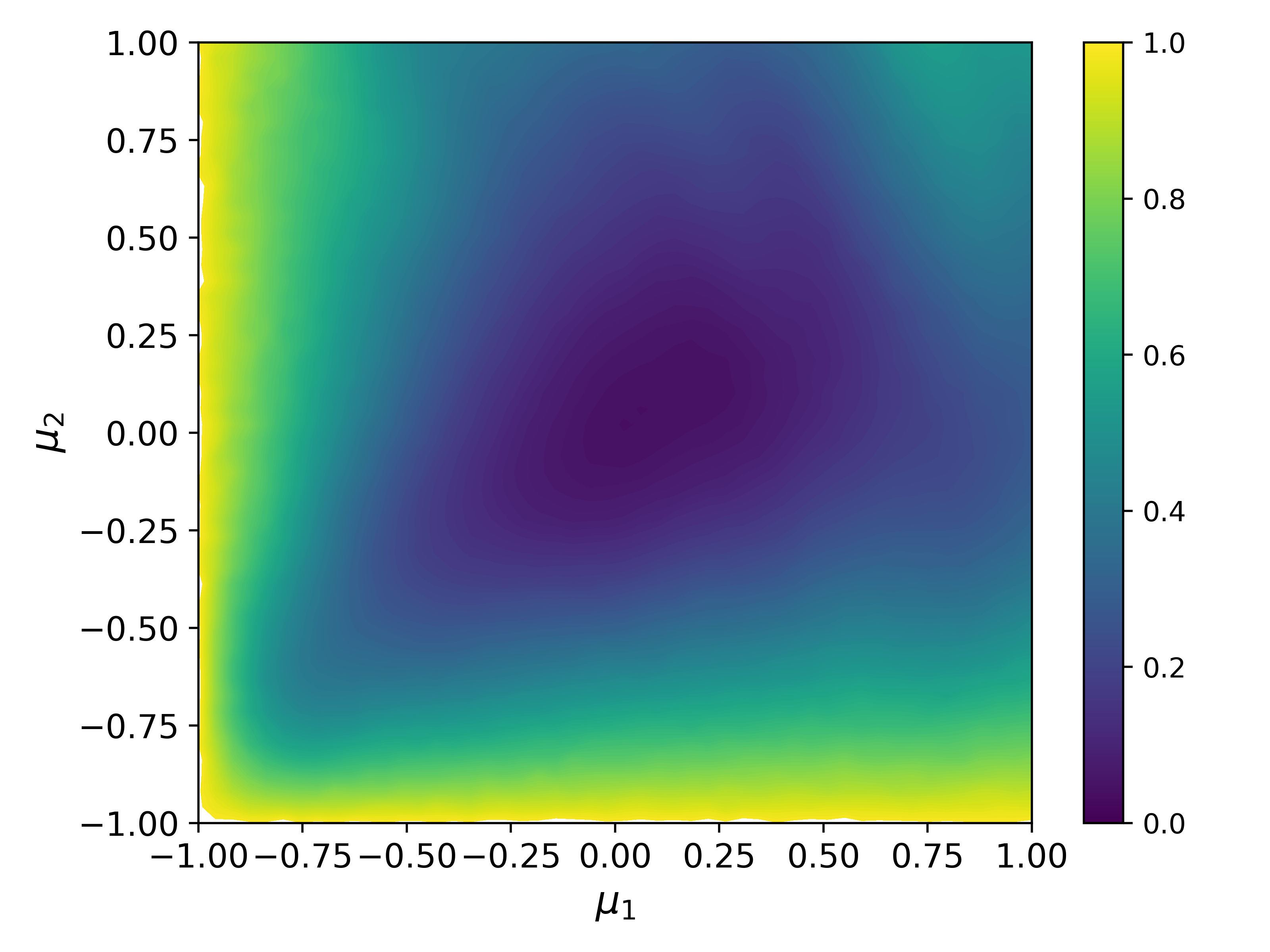}}
    \subfloat[Stochastic]{\includegraphics[width=0.5\textwidth]{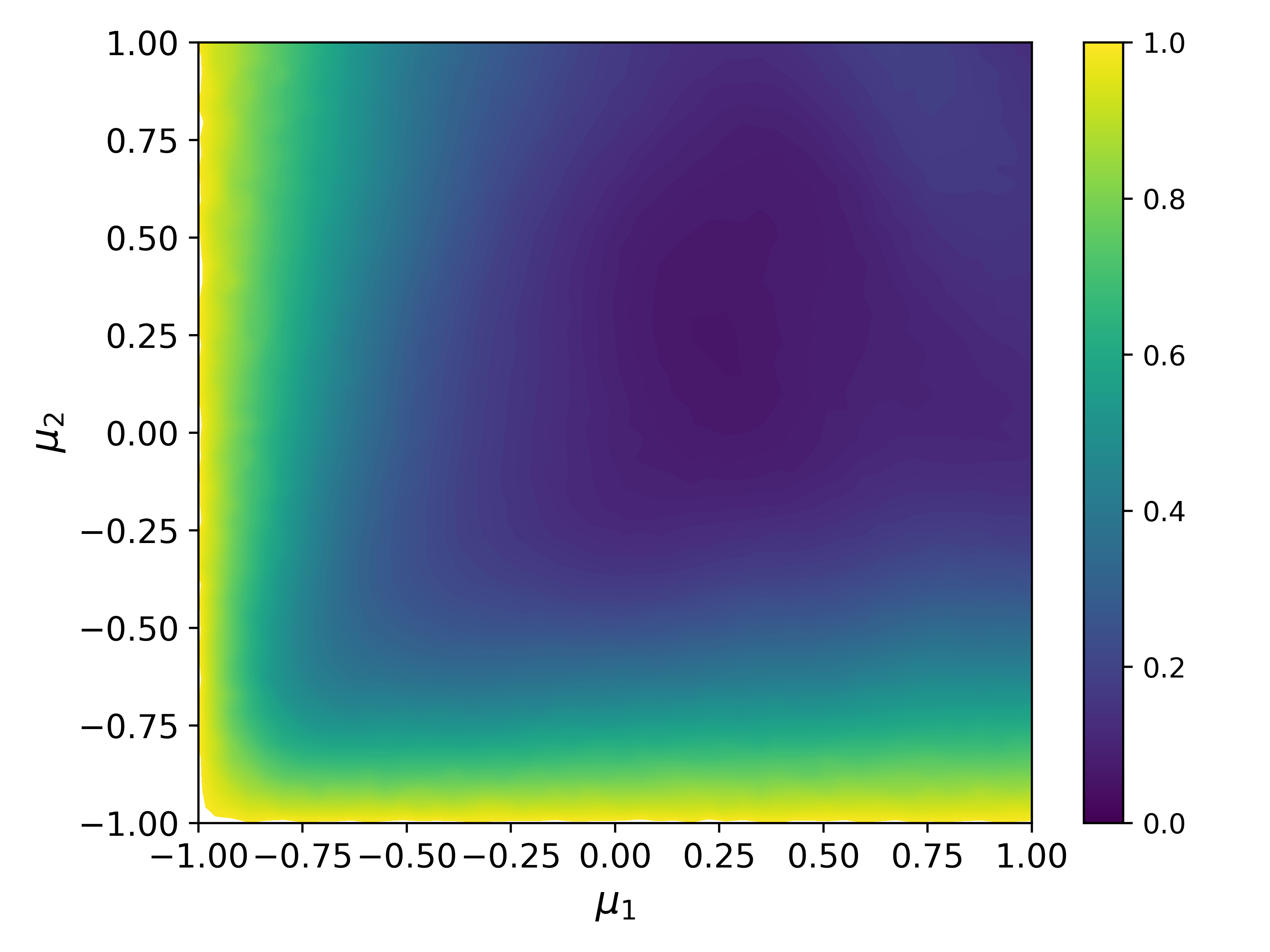}}
    \caption{Average objective values among samples of uncertainty $\xi$ as a function of $\mu_1,\ \mu_2\in [-1,1]$. The control solutions are obtained from the deterministic and stochastic optimization model with $\alpha=0.5$ and offset variances as $0.1$.}
    \label{fig:energy-func-noise-dvssp}
\end{figure}

With a given risk level parameter $\eta=0.01,\ 0.05,\ 0.1$, we solve the in-sample stochastic optimization model with weight $\alpha=0$ and obtain the $1-\eta$ percentile value among all the scenarios.
To evaluate the performance of our model on controlling the risk, we present the percentage of scenarios in the out-of-sample tests with an objective value smaller than the $1-\eta$ percentile of the in-sample objective value distribution in \Cref{tab:res-cvar}.
We show that for all the variance settings, the difference between the percentage and the ideal value $1-\eta$ is mostly smaller than 1\%.
We have the closest percentage when the risk level is $\eta=0.05$. When the risk control is too strict or too relaxed, the percentage in out-of-sample tests is usually smaller than $1-\eta$, and the difference increases when variances increase.
\begin{table}[htbp]
  \centering
  \caption{Percentage of scenarios in out-of-sample tests with an objective value smaller than $1-\eta$ percentile of the in-sample distribution.}
    \begin{tabular}{rr|rrr}
    \hline
    $\sigma_1^\mathrm{offset}$ & {$\sigma_2^\mathrm{offset}$} & $1-\eta=99\%$ & $1-\eta=95\%$ & {$1-\eta=90\%$} \\
    \hline
    0.05  & 0.05  & 98.68\% & 95.76\% & 89.82\% \\
    0.05  & 0.1   & 98.54\% & 95.68\% & 90.34\% \\
    0.1   & 0.05  & 98.62\% & 95.46\% & 87.74\% \\
    0.1   & 0.1   & 98.26\% & 95.50\% & 88.42\% \\
    \hline
    \end{tabular}%
  \label{tab:res-cvar}%
\end{table}%

\subsubsection{Results of CPU Time}
\label{sec:res-energy-time}
We fix the offset variances $\sigma_j^\mathrm{offset}=0.05,\ j=1,\ 2$, and  the weight parameter $\alpha=0.5$ and solve the stochastic optimization model with different numbers of qubits $q=2,\ 6$, different numbers of time steps $T=20,\ 50$, and different numbers of scenarios $S=1,\ 20,\ 100,\ 200,\ 300$.
We present the CPU time and the number of iterations for L-BFGS-B in \Cref{tab:energy-time}.
\begin{table}[ht]
  \centering
  \caption{CPU time and iteration results of different problem sizes, including the number of qubits $q$, the number of time steps $T$, and the number of scenarios $S$.}
    \begin{tabular}{rrrrr}
    \hline
    {$q$} & \multicolumn{1}{r}{$T$} & \multicolumn{1}{r}{$S$} & \multicolumn{1}{r}{CPU time (s)} & \multicolumn{1}{r}{Iteration} \\
    \hline
    2     & 20    & 300   & 32.90 & 15 \\
    2     & 50    & 300   & 99.74 & 27 \\
    6     & 50    & 300   & 2814.33 & 26 \\
    6     & 50    & 200   & 971.06 & 14 \\
    6     & 50    & 100   & 401.20 & 19 \\
    6     & 50    & 20    & 121.18 & 15 \\
    6     & 50    & 1     & 21.66 & 34 \\
    \hline
    \end{tabular}%
  \label{tab:energy-time}%
\end{table}%
We show that the number of qubits $q$ has the most important impact on the CPU time because the dimension of Hamiltonian matrices grows exponentially with $q$.
This issue can be potentially resolved in the future by using quantum computers to conduct time evolution.
An increasing number of scenarios $S$ leads to an increase in CPU time, which can be reduced by parallel computing on multiple CPU cores of classical computers or multiple quantum computers.
The CPU time also increases with the increase in the number of time steps $T$. Moreover, we notice that the number of iterations is robust regardless of the problem size.

\subsection{Circuit Compilation Problem}
\label{sec:res-circuit}
Quantum circuit compilation aims to represent a circuit by specific controllers and constraints, to build a foundation for general quantum algorithms. In this section, we apply the modified Adam method (\Cref{alg:adam}) to study a compilation problem for the quantum circuit that has the ground state energy of molecules
generated by the unitary coupled-cluster single-double method~\citep{bartlett2007coupled, romero2018strategies}.
We consider a gmon qubit quantum system with $q$ qubits, which is a superconducting qubit architecture combining high-coherence qubits and tunable qubit couplings~\citep{chen2014qubit}.
Each qubit has a flux-drive controller and a charge-drive controller, and they are connected with their nearest neighbors according to a rectangular-grid topology.
The set of connected qubits is denoted by $E$ with size $|E|$. Each connected qubit group in $E$ has a corresponding Hamiltonian controller.
The initial operator $X_\mathrm{init}$ is a $2^q$-dimensional identity matrix, and the target operator $X_\mathrm{targ}$ is the matrix formulation of the circuit for a certain molecule. The specific formulation of the deterministic model is
\begin{subequations}
\label{eq:model-molecule-d}
\begin{align}
    \min\quad  & 1 - \frac{1}{2^q}\left|\operatorname{tr}\left\{X_\mathrm{targ}^\dagger X_T\right \}\right| \\
    \mathrm{s.t.}\quad  & H_k = \sum_{j=1}^{2q} u_{jk}H^{(j)} + \sum_{(j_1,j_2)\in E} u_{j_1j_2k}H^{(j_1j_2)},\ k=1,\ldots,T\\
                    \label{eq:model-molecule-h1-1}
                    & H^{(2j-1)} = J_c \sigma^x_{j},\ H^{(2j)} = J_f \left(\begin{array}{ll}
                                0 & 0 \\
                                0 & 1
                                \end{array}\right)_{(j)},\ j=1,\ldots,q\\
                    \label{eq:model-molecule-hc}
                    & H^{(j_1j_2)} = J_e \sigma^x_{j_1}\sigma^x_{j_2},\ \forall (j_1, j_2)\in E\\
                    & X_{k}=e^{-i H_k \Delta t}X_{k-1},\ k=1,\ldots,T \nonumber \\
                    & X_0 = X_\mathrm{init}\nonumber \\
                    & \sum_{j=1}^{2q} u_{jk} + \sum_{(j_1,j_2)\in E} u_{j_1j_2k}=1,\ k=1,\ldots,T\\
                    & u_{jk},\ u_{j_1j_2k} \in \left\{0,1\right\},\ j=1,\ldots,2q,\ \forall (j_1,j_2)\in E,\ k=1,\ldots,T,
\end{align}
\end{subequations}
where $\sigma_j^x$ are Pauli matrices for qubits $j=1,\ldots,q$ and the subscript $(j)$ in constraints~\eqref{eq:model-molecule-h1-1} represents the matrix operation acting on the $j$th qubit. The constants $J_c,\ J_f,\ J_e$ correspond to specific quantum machines and are set as $J_c = 0.2\pi,\ J_f=3\pi,\ J_e=0.1\pi$.

We assume that the variance of uncertainty among time steps $\sigma_j^\mathrm{time}=0.1\sigma_j^\mathrm{offset}$ for all the control Hamiltonians. All the single-qubit control Hamiltonians have the same uncertainty offset variance, represented by $\sigma_s^{\mathrm{offset}}$; and all two-qubit control Hamiltonians have the same uncertainty offset variance, represented by $\sigma_t^\mathrm{offset}$.
In \Crefrange{sec:res-circuit-scenario}{sec:res-circuit-variance}
we discuss the performance of the stochastic optimization model on an instance of the molecule H$_2$ (dihydrogen). The system includes $q=2$ qubits, $4$ single-qubit controllers, and a two-qubit controller. We set the evolution time $t_f=20$, number of time steps $T=50$, and number of rounding time steps $T_R=4000$; the risk level $\eta=0.05$.
We generate 10 groups, each with 500 scenarios sampled from the same distribution under in-sample tests, to conduct out-of-sample tests for evaluating the obtained controls.
In \Cref{sec:res-circuit-time} we present the CPU time of solving the circuit compilation problem with different molecules and problem sizes.

\subsubsection{Results of Scenarios}
\label{sec:res-circuit-scenario}
In this section we set the weight parameter $\alpha=0.5$ and offset variances $\sigma_s^\mathrm{offset}=\sigma_t^\mathrm{offset}=0.01$.
We test our algorithm with a different number of scenarios $S=20,\ 40,\ 80$, and $160$ with adjusted learning rates of $0.05,\ 0.06,\ 0.08,$ and $0.15$. To compare the performance under the same computational costs, which is represented by the product of the number of scenarios and iterations ($S\times K$), we set the number of iterations to $2000,\ 1000,\ 500,$ and $250$ accordingly.
In \Cref{fig:h2-scenario} we show how the objective value varies with the computational costs during the algorithm procedure by a log-log scale.
We show that with a larger number of scenarios, the objective value is more stable because the method learns more about the distribution at each iteration. However, the convergence is slower because the algorithm runs for fewer iterations.
\begin{figure}[ht]
    \centering
    \includegraphics[width=0.55\textwidth]{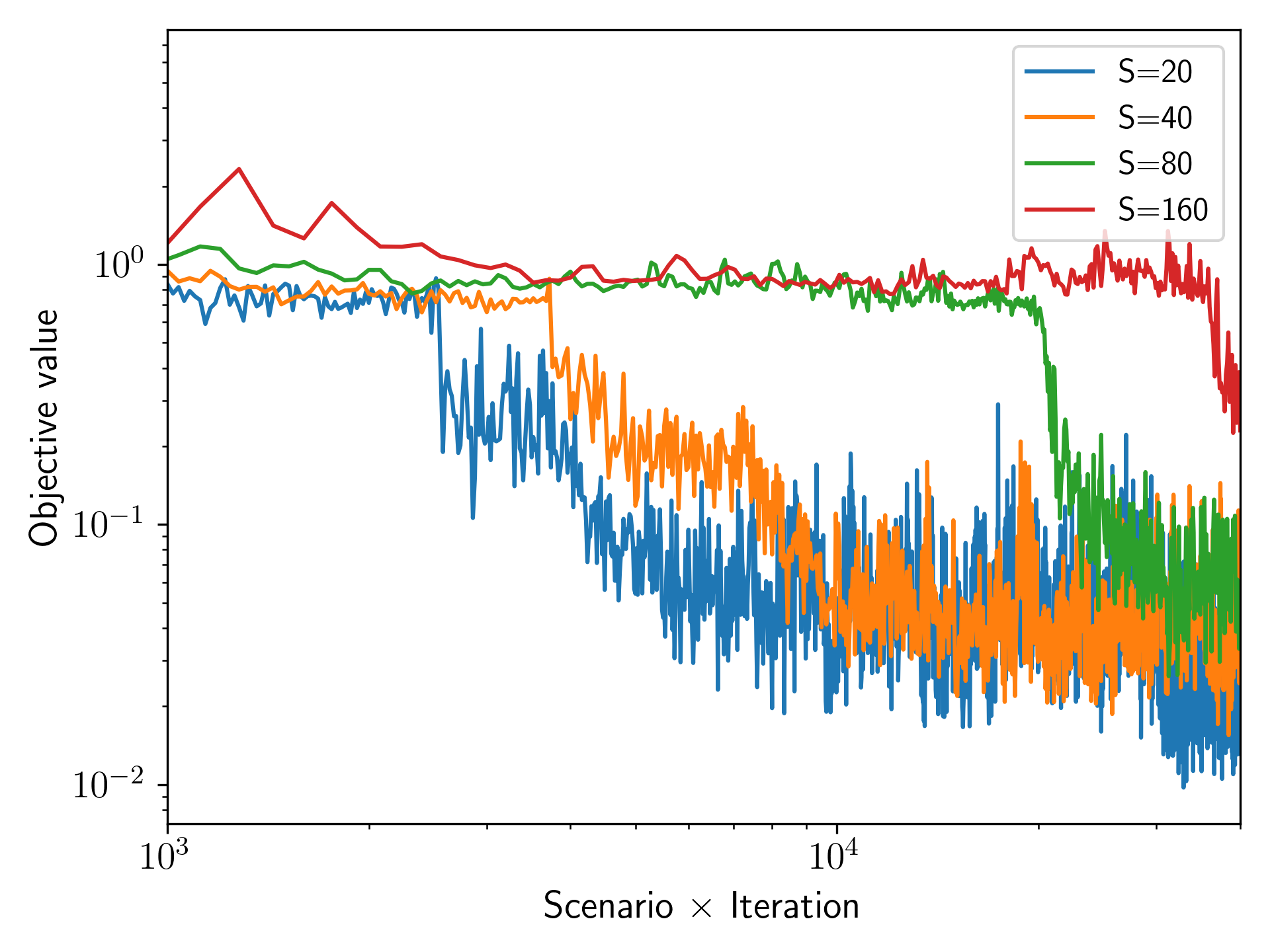}
    \caption{Log-log scale figure for the objective values during the in-sample test iterations. The x-axis represents the multiplication of the number of scenarios and iterations. Blue, orange, green, and red lines represent $S=20$, 40, 80, and 160, respectively.}
    \label{fig:h2-scenario}
\end{figure}

We present the out-of-sample test results for the controls obtained by a different number of scenarios, including the mean value, the CVaR function value, and the total objective value as weighted summation with $\alpha=0.5$ in \Cref{tab:h2-scenario}. We show that the control with $S=20$ achieves the lowest objective value primarily because of its higher number of iterations within the same computational cost.
\begin{table}[ht]
  \centering
  \caption{Mean, CVaR function value, and total objective values ($\alpha=0.5$) in out-of-sample tests for a different number of scenarios.
The offset variances for both in-sample and out-of-sample tests and all the controllers are 0.01.}
    \begin{tabular}{rrrr}
    \hline
    Scenario & Mean & CVaR  & Total \\
    \hline
    20    & 8.19$\times 10^{-3}$ & 3.21$\times 10^{-2}$ & 2.02$\times 10^{-2}$ \\
    40    & 1.21$\times 10^{-2}$ & 4.30$\times 10^{-2}$ & 2.76$\times 10^{-2}$ \\
    80    & 1.35$\times 10^{-2}$ & 6.76$\times 10^{-2}$ & 4.06$\times 10^{-2}$ \\
    160   & 2.38$\times 10^{-2}$ & 9.51$\times 10^{-2}$ & 5.94$\times 10^{-2}$ \\
    \hline
    \end{tabular}%
  \label{tab:h2-scenario}%
\end{table}%

\subsubsection{Results of Variance}
\label{sec:res-circuit-variance}
We compare the performance of the deterministic and the stochastic optimization model with sample size $S=20$, weight parameter $\alpha=0.5$ under different offset variances $\sigma_s^\mathrm{offset}\in \{0.01, 0.05\},\ \sigma_t^\mathrm{offset}\in \{0.01, 0.05\}$.
In \Cref{tab:h2-variance} we present the mean value and the CVaR function value of the deterministic model (represented by ``Deter'') and the stochastic program (represented by ``SP'') for different variances of the uncertainty offsets.

\begin{table}[htbp]
  \centering
  \caption{Mean values and CVaR function values of different offset variances of single-qubit controllers ($\sigma_s^{\mathrm{offset}}$) and two-qubit controllers ($\sigma_t^{\mathrm{offset}}$) for the deterministic model (``Deter'') and the stochastic program (``SP''). The in-sample and out-of-sample tests are under the same distribution. We bold the better results for each variance setting.}
    \begin{tabular}{rrrrrr}
      \hline
    \multicolumn{1}{c}{\multirow{2}[0]{*}{$\sigma^\mathrm{offset}_s$}} & \multicolumn{1}{c}{\multirow{2}[0]{*}{$\sigma^\mathrm{offset}_t$}} & \multicolumn{2}{c}{Mean} & \multicolumn{2}{c}{CVaR} \\
    \cline{3-6}
          &       & \multicolumn{1}{r}{Deter} & \multicolumn{1}{r}{SP} & \multicolumn{1}{r}{Deter} & \multicolumn{1}{r}{SP} \\
          \hline
    0.01  & 0.01  & 0.639 & \textbf{8.19$\mathbf{\times 10^{-3}}$} & 0.986 & \textbf{3.21$\mathbf{\times 10^{-2}}$} \\
    0.01  & 0.05  & 0.639 & \textbf{8.74$\mathbf{\times 10^{-3}}$} & 0.988 & \textbf{4.32$\mathbf{\times 10^{-2}}$} \\
    \hline
    0.05  & 0.01  & 0.748 & \textbf{5.44$\mathbf{\times 10^{-2}}$} & 0.990 & \textbf{0.353} \\
    0.05  & 0.05  & 0.748 & \textbf{9.84$\mathbf{\times 10^{-2}}$} & 0.990 & \textbf{0.419} \\
    \hline
    \end{tabular}%
  \label{tab:h2-variance}%
\end{table}%

Comparing the results of different variances, we show that the uncertainty in single-qubit controllers significantly affects the objective values more than the two-qubit controllers do, mainly because single-qubit controllers are expected to have more impact on unitary operators and they are the majority of controllers in the quantum system.
For example, the instance of H$_2$ includes 4 single-qubit controllers but only 1 two-qubit controller.
Moreover, increasing variance leads to a larger increase in the CVaR function value, indicating a larger negative impact on scenarios with large deviations.

We demonstrate that the control of the deterministic model performs badly even under a small variance, with all the mean values larger than 0.6 and all the CVaR function values larger than 0.9. On the other hand, the control of our stochastic optimization model performs dramatically better on the mean values and CVaR function values for all the settings, illustrating the advantages of our model considering the uncertainty in quantum control systems.

In \Cref{fig:h2-var-histogram} we present the histogram of out-of-sample tests for both deterministic and stochastic optimization models with variances for all the controllers as 0.01 and 0.05.
The blue and yellow histograms represent the results of the deterministic and the stochastic optimization model, respectively. We show that with the increase of variance, both models have heavier tail distribution, but the stochastic optimization model always has a much lighter tail distribution compared with the deterministic model.

\begin{figure}[ht]
    \centering
    \subfloat[$\sigma_s^\mathrm{offset}=\sigma_t^\mathrm{offset}=0.01$]{\includegraphics[width=0.45\textwidth]{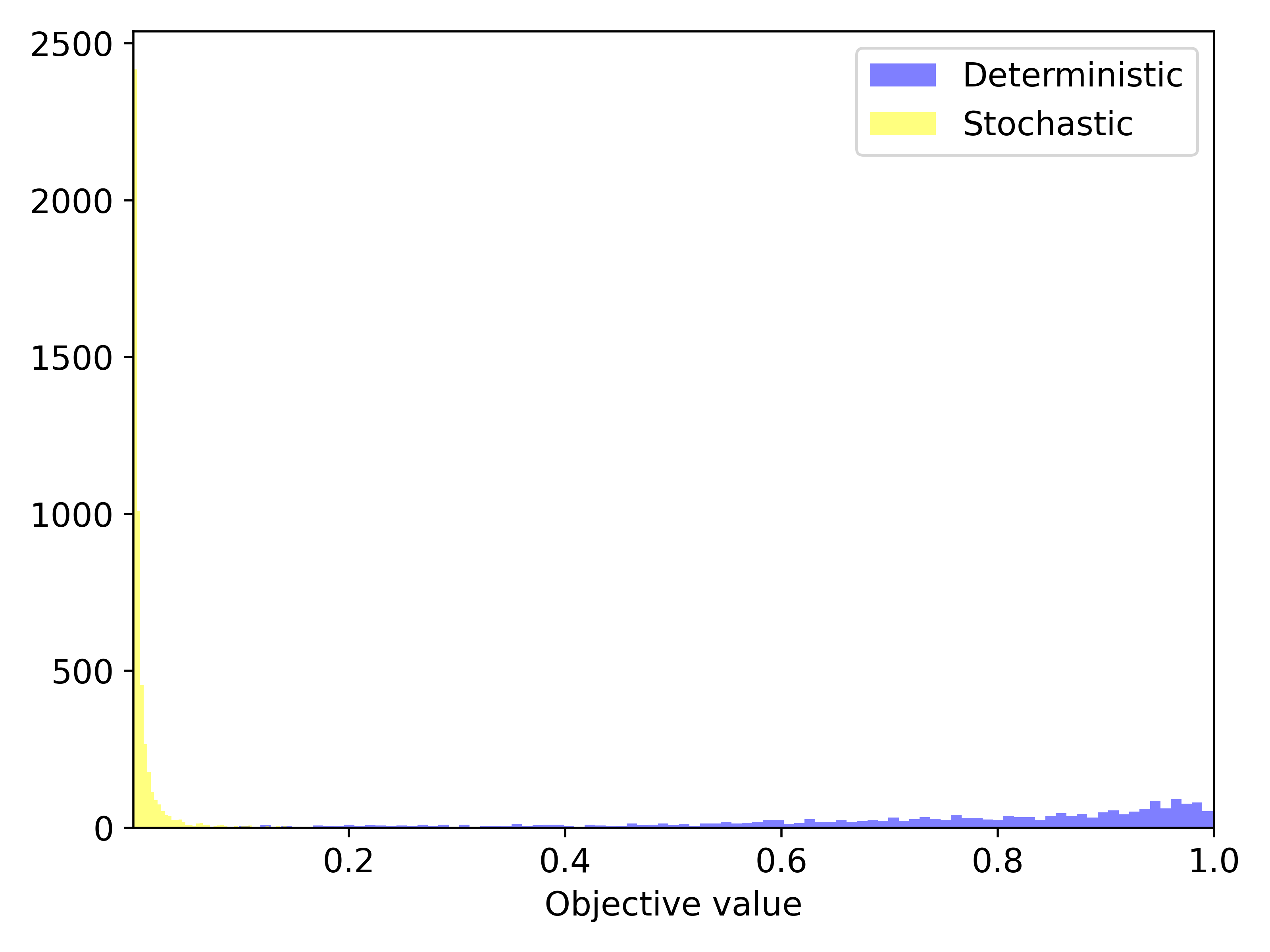}}
    \subfloat[$\sigma_s^\mathrm{offset}=\sigma_t^\mathrm{offset}=0.05$]{\includegraphics[width=0.45\textwidth]{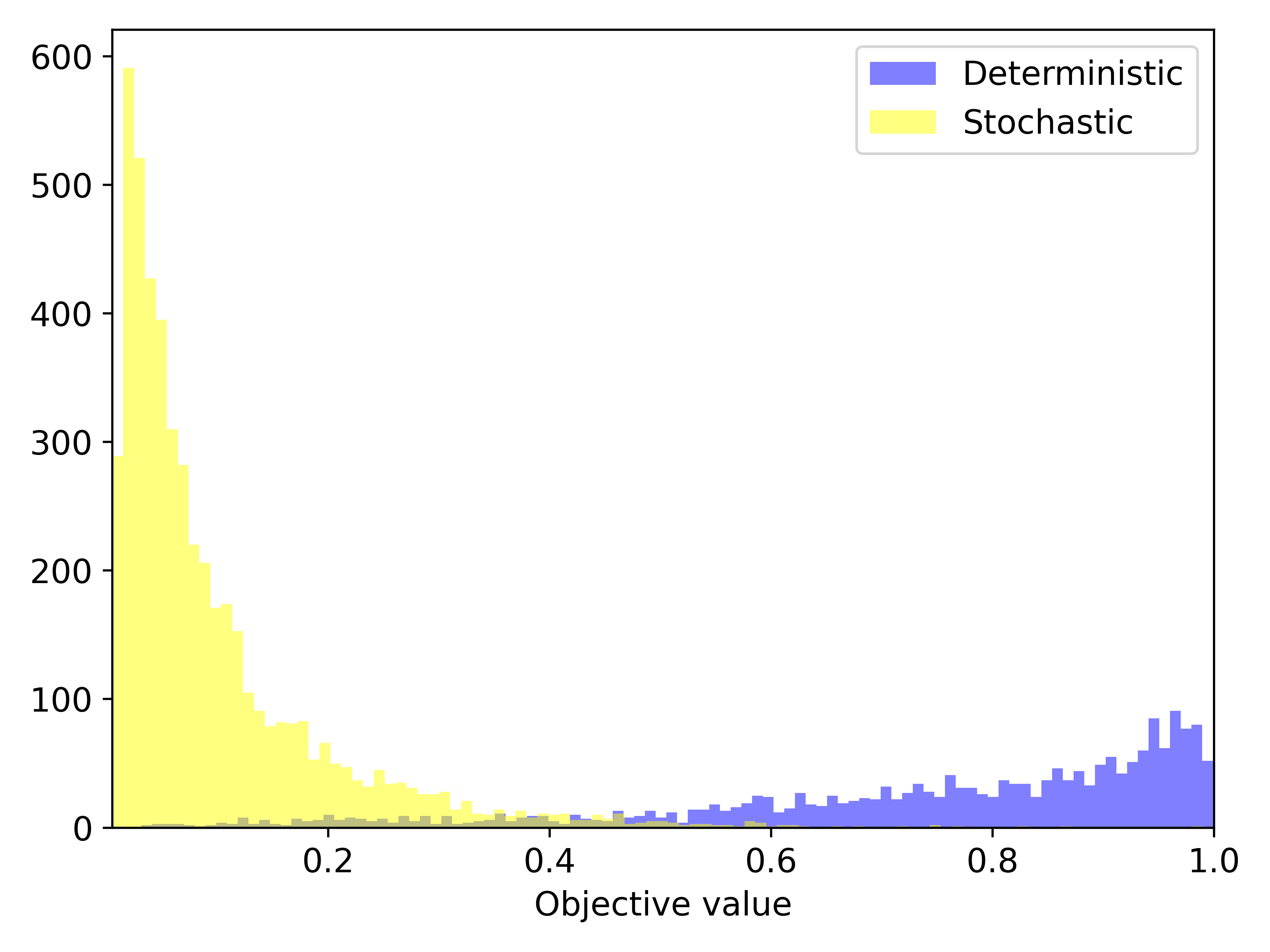}}
    \caption{Histograms of out-of-sample tests with offset variances $0.01$ and 0.05 for all the controllers. Blue and yellow histograms represent the results of the deterministic and stochastic optimization models, respectively.}
    \label{fig:h2-var-histogram}
\end{figure}

\subsubsection{Results of CPU Time}
\label{sec:res-circuit-time}
We set the iteration number for the modified Adam at 2000, weight parameter $\alpha=0.5$, and offset variances $\sigma_s^\mathrm{offset}=\sigma_t^\mathrm{offset}=0.01$.
We solve the stochastic optimization model for molecules H$_2$ and LiH, with time steps $T=50,\ 100$ and scenario numbers $S=20,\ 40$.
In \Cref{tab:h2-time} we present the CPU time of the algorithm for different problem sizes and molecules, with the respective number of qubits $q$ and controllers $N$.
\begin{table}[ht]
  \centering
  \caption{CPU time results of different molecules, different numbers of times steps $T$, and different numbers of scenarios $S$. We present the number of qubits $q$ and the number of controllers $N$ for molecules. }
    \begin{tabular}{crrrrr}
    \hline
    Molecule & $q$     & $N$     & $T$     & $S$     & CPU time (s) \\
    \hline
    &2     & 5     & 50    & 20    & 1383.62 \\
    H$_2$&2     & 5     & 50    & 40    & 2713.80 \\
    &2     & 5     & 100   & 20    & 2888.77 \\
    \hline
    &4     & 12    & 50    & 20    & 4168.16 \\
    LiH&4     & 12    & 50    & 40    & 7918.20 \\
    &4     & 12    & 100   & 20    & 8482.01 \\
    \hline
    \end{tabular}%
  \label{tab:h2-time}%
\end{table}%
We show that with the same number of time steps and scenarios, changing molecules leads to a significant CPU time increase, mainly because the dimension of Hamiltonian matrices increases exponentially with $q$ and the number of controllers also increases. The CPU time increases with time steps $T$ and scenario numbers $S$ approximately linearly. In practice, the CPU time can be potentially reduced by conducting time evolution on quantum computers and parallel computing among different scenarios on large amounts of CPU cores.

\section{Conclusion}
\label{sec:conclusion}
In this paper we built a stochastic mixed-integer program with the sample-based reformulation for the quantum optimal control problem with uncertain Hamiltonians.
We introduced an objective function aiming to balance risk-neutral and risk-averse measurements, which are evaluated by expectation and CVaR function, respectively.
We derived a closed-form expression and discussed the derivative for the objective function.
We modified and applied two gradient-based methods to solve the continuous relaxation and obtained binary solutions by the sum-up rounding technique with a discussion of the rounding errors.

We conducted numerical simulations on multiple quantum control instances.
Based on the results, we recommend L-BFGS-B for quantum control problems minimizing system energy and the modified Adam for problems minimizing infidelity.
The results show that our stochastic optimization model outperforms the deterministic model in terms of both average and robust performance for different variance levels.

With all the simulations completed on classical computers, we find that the number of qubits in quantum systems has a significant impact on the computational time.
Conducting time-evolution processes on quantum computers to reduce computational time is an interesting direction for future research.
Furthermore, model-free optimization methods, including reinforcement learning, provide chances to capture more complex uncertainties in quantum systems.

\section*{Acknowledgements}
This work was supported by the U.S. Department of Energy, Office of Science, Office of Advanced Scientific Computing Research, Accelerated Research for Quantum Computing program  under Contract No. DE-AC02-06CH11357. X. F. and S. S. received partial support from U.S. National Science Foundation (NSF) grant CMMI-2041745 and the Department of Energy (DOE) grant \#DE-SC0018018. L.~T.~B.~is a KBR employee working under the Prime Contract No.~80ARC020D0010 with the NASA Ames Research Center and is grateful for support from the DARPA Quantum Benchmarking program under IAA 8839, Annex 130.

The United States Government retains and the publisher, by accepting the article for publication, acknowledges that the United States Government retains a non-exclusive, paid-up, irrevocable, worldwide license to reproduce, prepare derivative works, distribute copies to the public, and perform publicly and display publicly, or allow others to do so, for United States Government purposes. All other rights are reserved by the copyright owner.

\bibliographystyle{unsrtnat}
\bibliography{Xinyu}
\end{document}